\documentclass[a4paper,11pt]{article}
\usepackage{jheppub} 
\usepackage{lineno}

\usepackage{physics}
\usepackage{mathtools}
\usepackage{tensor}
\usepackage{amsmath}
\usepackage{amsthm}
\usepackage{amssymb}
\usepackage{nicefrac}
\usepackage{bbold}
\usepackage{BOONDOX-cal} 

\usepackage{caption}
\usepackage{subcaption}

\usepackage[nameinlink]{cleveref}

\usepackage[english]{babel}
\usepackage[utf8]{inputenc}
\usepackage[T1]{fontenc}
\usepackage[autostyle=true]{csquotes}

\usepackage[outline]{contour}
\usepackage{tikz}
\usepackage{tikz-3dplot}
\usepackage[compat=1.1.0]{tikz-feynman}
\usepackage{steinmetz}
\usetikzlibrary{calc,positioning,arrows,arrows.meta,decorations.markings}


\hyphenation{Ash-te-kar}
\hyphenation{Le-wan-dow-ski}


\title{\boldmath Bell states for fermions in loop quantum gravity}







\author{Hanno Sahlmann}
\emailAdd{hanno.sahlmann@fau.de}

\author{and Martin Zei\ss}
\emailAdd{martin.zeiss@fau.de}

\affiliation{Institute for Quantum Gravity (IQG), Department of Physics,\\
Friedrich-Alexander-Universität Erlangen-Nürnberg (FAU),\\
Staudtstr. 7, 91058 Erlangen, Germany}

\abstract{Fermion fields are fundamental for the description of nature and also fit very naturally into the framework of loop quantum gravity. Motivated partially by proposals to use gravitationally mediated entanglement of matter as a witness for the quantum nature of gravity, we investigate how such entanglement can be defined and investigated in loop quantum gravity. In particular, we ask how a pair of fermions in a Bell state could be described in loop quantum, gravity. 

We demonstrate that the notion of fermionic entanglement in loop quantum gravity is subtle, by showing that some potential ways to define it fail. We then investigate a kinematical observable involving both, fermionic and gravitational degrees of freedom, the component of the fermion spin normal to a surface. We study its properties, and compare it to the standard operator for components of spin in a given direction in quantum mechanics. Using these normal components of spin, we define a kinematical observable that measures the correlation between space-like separated fermions which closely mirrors the CHSH observable. Finally, we exhibit states of the fermions coupled to quantum geometry that violate the Bell-CHSH inequality.}

\begin{document}
\maketitle
\flushbottom
\tikzset{->-/.style={decoration={
  markings,
  mark=at position #1 with {\arrow{>}}},postaction={decorate}}
}

\tikzfeynmanset{ with arrow/.style = {
   decoration={
     markings,
     mark=at position 0.5
          with {\arrow[xshift=2.5pt]{Stealth[width=4pt,length=5pt]}}
     },
   postaction=decorate}
}

\NewDocumentCommand\tcoupling{mmmmO{}}{
\begin{tikzpicture}[baseline=($(j1.base)!.5!(j3.base)$)]
\begin{feynman}[inline=($(j1.base)!.5!(j3.base)$)]
    \vertex [small, dot] (j12) {}; 
    \vertex [above left=of j12] (j1) {
    \ifthenelse{\equal{#1}{0.5}}{\(\frac{1}{2}\)}{\(j_{#1}\)}
    }; 
    \vertex [above right=of j12] (j2) {
    \ifthenelse{\equal{#2}{0.5}}{\(\frac{1}{2}\)}{\(j_{#2}\)}
    }; 
    \vertex [below=of j12, small, dot] (j34) {};
    \vertex [below left=of j34] (j3) {
    \ifthenelse{\equal{#3}{0.5}}{\(\frac{1}{2}\)}{\(j_{#3}\)}
    }; 
    \vertex [below right=of j34] (j4) {
    \ifthenelse{\equal{#4}{0.5}}{\(\frac{1}{2}\)}{\(j_{#4}\)}
    }; 
    \ifthenelse{\equal{#1}{0.5}}{
    \diagram*{
        (j12) -- [charged scalar] (j1),
        (j12) -- [fermion] (j2),
        (j34) -- [fermion, edge label'=\({#5}\)] (j12),
        (j34) -- [fermion] (j3),
        (j34) -- [fermion] (j4)
    };
    }{
    \ifthenelse{\equal{#2}{0.5}}{
    \diagram*{
        (j12) -- [fermion] (j1),
        (j12) -- [charged scalar] (j2),
        (j34) -- [fermion, edge label'=\({#5}\)] (j12),
        (j34) -- [fermion] (j3),
        (j34) -- [fermion] (j4)
    };
    }{
    \ifthenelse{\equal{#3}{0.5}}{
    \diagram*{
        (j12) -- [fermion] (j1),
        (j12) -- [fermion] (j2),
        (j34) -- [fermion, edge label'=\({#5}\)] (j12),
        (j34) -- [charged scalar] (j3),
        (j34) -- [fermion] (j4)
    };
    }{
    \ifthenelse{\equal{#4}{0.5}}{
    \diagram*{
        (j12) -- [fermion] (j1),
        (j12) -- [fermion] (j2),
        (j34) -- [fermion, edge label'=\({#5}\)] (j12),
        (j34) -- [fermion] (j3),
        (j34) -- [charged scalar] (j4)
    };
    }{
    \diagram*{
        (j12) -- [fermion] (j1),
        (j12) -- [fermion] (j2),
        (j34) -- [fermion, edge label'=\({#5}\)] (j12),
        (j34) -- [fermion] (j3),
        (j34) -- [fermion] (j4)
    };
    }
    }
    }
    }
\end{feynman}
\end{tikzpicture}
}
\NewDocumentCommand\scoupling{mmmmO{}}{
\begin{tikzpicture}[baseline=($(j1.base)!.5!(j3.base)$)]
\begin{feynman}[inline=($(j1.base)!.5!(j3.base)$)]
    \vertex [small, dot] (j13) {}; 
    \vertex [right=of j13, small, dot] (j24) {};
    \vertex [above left=of j13] (j1) {
    \ifthenelse{\equal{#1}{0.5}}{\(\frac{1}{2}\)}{\(j_{#1}\)}
    }; 
    \vertex [above right=of j24] (j2) {
    \ifthenelse{\equal{#2}{0.5}}{\(\frac{1}{2}\)}{\(j_{#2}\)}
    }; 
    \vertex [below left=of j13] (j3) {
    \ifthenelse{\equal{#3}{0.5}}{\(\frac{1}{2}\)}{\(j_{#3}\)}
    }; 
    \vertex [below right=of j24] (j4) {
    \ifthenelse{\equal{#4}{0.5}}{\(\frac{1}{2}\)}{\(j_{#4}\)}
    }; 

    \ifthenelse{\equal{#1}{0.5}}{
    \diagram*{
        (j13) -- [charged scalar] (j1),
        (j24) -- [fermion] (j2),
        (j13) -- [fermion, edge label'=\({#5}\)] (j24),
        (j13) -- [fermion] (j3),
        (j24) -- [fermion] (j4)
    };
    }{
    \ifthenelse{\equal{#2}{0.5}}{
    \diagram*{
        (j13) -- [fermion] (j1),
        (j24) -- [charged scalar] (j2),
        (j13) -- [fermion, edge label'=\({#5}\)] (j24),
        (j13) -- [fermion] (j3),
        (j24) -- [fermion] (j4)
    };
    }{
    \ifthenelse{\equal{#3}{0.5}}{
    \diagram*{
        (j13) -- [fermion] (j1),
        (j24) -- [fermion] (j2),
        (j13) -- [fermion, edge label'=\({#5}\)] (j24),
        (j13) -- [charged scalar] (j3),
        (j24) -- [fermion] (j4)
    };
    }{
    \ifthenelse{\equal{#4}{0.5}}{
    \diagram*{
        (j13) -- [fermion] (j1),
        (j24) -- [fermion] (j2),
        (j13) -- [fermion, edge label'=\({#5}\)] (j24),
        (j13) -- [fermion] (j3),
        (j24) -- [charged scalar] (j4)
    };
    }{
    \diagram*{
        (j13) -- [fermion] (j1),
        (j24) -- [fermion] (j2),
        (j13) -- [fermion, edge label'=\({#5}\)] (j24),
        (j13) -- [fermion] (j3),
        (j24) -- [fermion] (j4)
    };
    }
    }
    }
    }
\end{feynman}
\end{tikzpicture}
}

\NewDocumentCommand{\ninejsymbol}{mmmmmmmmm}{ 
\begin{Bmatrix}
  #1 & #2 & #3 \\
  #4 & #5 & #6 \\
  #7 & #8 & #9
\end{Bmatrix}
}
\NewDocumentCommand{\sixjsymbol}{mmmmmm}{ 
\begin{Bmatrix}
  #1 & #2 & #3 \\
  #4 & #5 & #6 
\end{Bmatrix}
}

\newtheorem{prop}{Proposition}[section]
\newtheorem{definition}{Definition}[section]


\newcommand{\expec}[1]{\langle #1\rangle}
\newcommand{\scpr}[2]{\langle#1\, \vert \, #2 \rangle}
\newcommand{\sscpr}[3]{\langle#1\, \vert \, #2 \, \vert \, #3\rangle} 
\newcommand{\mea}[1]{\text{d}#1\,}
\newcommand{\pro}[1]{\ket{#1}\bra{#1}}
\newcommand{\one}{\mathbb{1}}

\newcommand{\cre}[2]{c^\dagger{}^{#2}(#1)}
\newcommand{\ann}[2]{c_{#2}(#1)}

\newcommand{\vac}{\left\lvert 0 \right\rangle}

\newcommand{\SprojE}[0]{S_{\mathcal{S}}}
\newcommand{\SprojF}[0]{2 S_{E'} \cdot \Sx{e_{n+1}(0)}{}}
\newcommand{\Sx}[2]{S^{#2}({#1})}
\newcommand{\Sxl}[2]{S_{#2}({#1})}
\newcommand{\Se}[2]{S_{#1}^{#2}}
\newcommand{\SE}[2]{S_{#1}^{#2}}

\newcommand{\jrep}[2]{\pi_{#1}({#2})}


\newcommand{\gbra}[1]{(\,#1\,\rvert}
\newcommand{\gscpr}[2]{(\,#1\, \vert \, #2\, \rangle}

\renewcommand{\acomm}[2]{\left[\,#1\, ,\, #2\, \right]_+}
\section{Introduction}

Fermion fields are fundamental for the description of nature. Fermions also fit very naturally into the framework of loop quantum gravity (LQG). They transform nontrivially under diffeomorphisms and, in the variables used in loop quantum gravity, under the same gauge transformations as the gravitational fields. It was observed early that they also fit neatly into the quantum geometry framework of LQG \cite{Morales-Tecotl:1994rwu,Morales-Tecotl:1995oxb,Smolin:1994uz} and are strongly entangled with the quantum geometry already at the kinematical level. Therefore, they potentially constitute sensitive probes for observable effects from loop quantum gravity. 

In the present work, we establish three results. 
\begin{enumerate}
    \item We demonstrate that the notion of fermionic entanglement in loop quantum gravity is subtle, by showing that some obvious ways to investigate it fail. 
    \item We expand on an idea from \cite{Mansuroglu:2020acg} and investigate a kinematical observable, the component of the fermion spin normal to a surface, study its properties, and compare it to the standard operator for components of spin in quantum theory.
    \item Using the signs of spin components relative to surfaces, we define a kinematical observable that measures the correlation between space-like separated spins that closely mirrors the CHSH observable \cite{Clauser:1969ny}. We then exhibit states of the fermions coupled to quantum geometry that violate the Bell-CHSH inequality.
\end{enumerate}
Let us motivate this work and put it into context in the following. 

Fermions couple very naturally to the spin network excitations of loop quantum gravity. \cite{Morales-Tecotl:1994rwu,Thiemann:1997rq,Baez:1997bw,Bojowald:2007nu,Mansuroglu:2020acg,Lewandowski:2021bkt}. Since they transform under gauge transformations, fermionic excitations have to be combined with gravitational excitations already on the kinematical level. A typical state 
\begin{equation}
   \ldots{h}[A]_C{}^B \; {c}^\dagger_B(p)\; \ket{0}
\end{equation}
combines gravitational holonomies and fermionic excitations in such a way that they are maximally entangled. In this way, fermionic excitations in loop quantum gravity exhibit interesting properties already at the kinematical level. If it could be shown that they behave observably different due to some of these properties, this would constitute a sensitive probe of the framework of loop quantum gravity as a whole. Therefore, it is useful to compare the behavior of fermions in loop quantum gravity to those in quantum mechanics or standard quantum field theory, and the current work contributes to this effort. 

As a second motivation, and with an eye towards the dynamics of the theory, we point to very interesting proposals to establish the quantum nature of gravity by experimentally observing gravitational generation of entanglement \cite{Bose:2017nin,Marletto:2017kzi}. Going beyond merely establishing entanglement generation experimentally, it may one day be possible to make detailed measurements of this process, and thereby obtain clues as to the underlying quantum gravity theory. Therefore it is relevant to study gravitational entanglement generation in loop quantum gravity. A first step is to understand and distinguish entangled and non-entangled multi-fermion states in loop quantum gravity. The present work takes this first step. We identify states of two fermions coupled to gravity in which the spin components normal to surfaces show non-classical correlations and hence entanglement. We will report elsewhere \cite{Sahlmann_Zeiss_TBD} on using the kinematical states considered in the present work to establish entanglement generation in spin foam amplitudes.

Fermions were first introduced to loop quantum gravity in \cite{Morales-Tecotl:1994rwu,Morales-Tecotl:1995oxb}. Important refinements to the quantum theoretical description were achieved in \cite{Thiemann:1997rq}. In previous work \cite{Mansuroglu:2020acg}, one of us has explored fermion spin, and fermion spin addition in loop quantum gravity. It turns out that if the Ashterkar-Barbero connection \cite{Ashtekar:1986yd,BarberoG:1994eia} is used to parallel transport fermion spins, they have the same commutator relations and behavior under addition as would be expected from quantum field theory or quantum mechanics in a flat, classical spacetime. 

One problem left open in \cite{Mansuroglu:2020acg} was the calculation of spin-spin correlations. Expectation values 
\begin{equation}
    \expec{S^a(p)S^b(q)}_\Psi, \qquad a,b = 1,2,3 
\end{equation}
of the components  of spins at points $p,q$ in a spatial slice $\Sigma$, taken in a combined state of matter and gravity, are not well defined in loop quantum gravity due to the coordinate dependence of the spin components. Correlation functions 
\begin{equation}
    \expec{S^i(p)S^k(q)}_\Psi\qquad i,k = 1,2,3 
\end{equation}
for spin components in internal space ($i$ and $k$ label a basis of \(\mathfrak{su}(2)\) in this case) turn out to be trivial for $p\neq q$ for a large class of states. This is not as problematic as it might seem, due to the fact that they are gauge dependent and thus far from observable. But it raises the question of how fermion entanglement can be described and measured in loop quantum gravity. In the present work, we give one answer to this, by introducing operators measuring the spin components relative to the normal of surfaces. These operator are gauge invariant and covariant under spatial diffeomorphisms. Correlations defined using these operators \emph{can} be non-trivial. In fact, we exhibit Bell-like states for the fermions coupled to gravitational excitations. These states turn out not to be spin network states, but linear combinations thereof, such that the intertwiners at the positions of the fermions are entangled in a specific way. 
They thus carry intertwiner entanglement in the sense of \cite{Livine:2017fgq,Chen:2022rty,Bianchi_2023}. Intertwiner entanglement has also been important in the construction of other states with special properties, see for example \cite{Freidel_2010,Bianchi:2018fmq,Baytas:2018wjd}. More generally, long-range entanglement is a hallmark of quantum field theoretical ground states for matter, see for example \cite{Agullo:2023fnp,Agullo:2024har}. 

The structure of the article is as follows. In section \ref{sec:fermions_lqg} we review the formalism for coupling a fermion field to quantum geometry and various operators involving the spin of the fermions. We then demonstrate that certain obvious ways of demonstrating non-local correlations between fermions fail.  

Section \ref{sec:spin_surface_normal} contains definition and properties of the new operators. In section \ref{sec:bell} we construct a CHSH-like observable and calcluate its expectation values for various states, demonstrating non-classical correlations for fermion spins in loop quantum gravity. 

We close with a summary and discussion in section \ref{sec:summary}.

\section{Fermion spins in loop quantum gravity}
\label{sec:fermions_lqg}

Fermions were first introduced to loop quantum gravity in \cite{Morales-Tecotl:1994rwu,Morales-Tecotl:1995oxb}. Important refinements to the quantum theoretical description were achieved in \cite{Thiemann:1997rq,Thiemann:1997rt}. An alternative that differs in important aspects is \cite{Baez:1997bw}, but the present work is based on the form \cite{Thiemann:1997rq}, as it holds important advantages for the dynamics and is much more natural in the loop quantum gravity framework. A spin foam formalism having the the kinematical states of \cite{Thiemann:1997rq} as boundary states was given in \cite{Bianchi:2010bn,Han:2011as}. Many aspects of the classical canonical theory in the presence of fermions were clarified in \cite{Bojowald:2007nu}. 

In previous work \cite{Mansuroglu:2020acg,Mansuroglu:2020dga}, one of us has explored fermion spin, and fermion spin addition in loop quantum gravity. We will only summarize the setup here, and refer to the literature cited above for details. 

\subsection{From the classical to the quantum theory}\label{sec:classical_to_quantum}
On the classical level, a theory of spin $\frac{1}{2}$ fermions coupled to gravity is described by the Einstein-Cartan-Holst action \cite{Thiemann:1997rt,Mercuri:2006um,Bojowald:2007nu} together with the covariant version of the Dirac action 
\begin{align}
    S[e,\omega,\Psi] = \frac{1}{16 \pi G} &\int_M \text{d}^4x\; |\det e| \, e^{\mu}_Ie^\nu_J P^{IJ}{}_{KL} F^{KL}_{\mu\nu}(\omega) \nonumber \\
    +\frac{\mathrm{i}}{2} &\int_M \text{d}^4x\; |\det e|\left[\overline{\Psi} \gamma^Ie_I^\mu \nabla_\mu \Psi - c.c.\right].
    \label{eq:action}
\end{align}
A foliation into three-dimensional hypersurfaces $M=\mathbb{R}\times \Sigma$ is performed and the canonical analysis yields the canonical gravitational variables \cite{Thiemann:1997rt, Bojowald:2007nu}
\begin{align}
    \mathcal{A}_a^i &=\Gamma_a^i +\beta K_k^i + 2\pi G\beta \, \epsilon^i{}_{kl} e^k_a J^l \label{eq:connection} \\ 
    E^a_i &=\frac{1}{2}\epsilon_{ijk}\epsilon^{abc} e^i_ae^j_b .
\end{align}
Here 
\begin{equation}
    \Gamma^i= (\star\,\omega)^{iJ}\,n_J, \quad
    K^i=\omega^{iJ}\,n_J
\end{equation}
are components of $\omega$ obtained by decomposing it into electric and magnetic parts relative to a time-like internal vector field $n$. On-shell, they become spin connection and extrinsic curvature. The $J^l$ are the spatial components of the fermion current
\begin{equation}
    J^l=\overline{\Psi}\gamma^l\Psi .
\end{equation}
In the chiral basis, the Dirac fermion $\Psi$ and its conjugate momentum splits into its chiral components and one defines  \cite{Thiemann:1997rt,Thiemann:1997rq}   
\begin{align}
    \theta^A(x) &=\sqrt[4]{\det q} \;\Psi_\text{R}(x), \qquad \pi_\theta(x)=-i\theta^\dagger(x)
\end{align}
and similarly for the left-handed component. 
In the following, we will focus on  $(\theta, \pi_\theta)$ only. The non-vanishing anti-Poisson relations are  
\begin{equation}
\label{eq:antipoi}
    \left\{\theta^A(x), \pi_{\theta B}(y)\right\}_+= \delta^{A}_{B} \, \delta^{(3)}(x,y). 
\end{equation}
The action (\ref{eq:action}) yields, among other things, a contribution to the Gauss constraint 
\begin{equation}
\label{eq:gauss}
    G^\theta_i(x)=
    \theta^\dagger(x)\sigma_i\theta(x), 
\end{equation} so the fermions transform under the local SU(2) gauge transformations in the defining representation.  

In the canonical formulation of the quantum theory, a Weyl field is described  using fermionic creation and annihilation operators
\begin{equation}
\label{eq:Frep}
    \theta^A(x) = \ann{x}{A},  \qquad -i \pi_{\theta B}(y)= \cre{y}{A}
\end{equation}
which satisfy canonical anticommutation relations\footnote{Note that the operators $\theta, \theta^\dagger$ are tangent space scalars, although they come from half density fermion fields. More details can be found in \cite{Thiemann:1997rq,Thiemann_2007,Mansuroglu:2020acg}.}  
\begin{align}
    \acomm{\ann{x}{A}}{\cre{y}{B}} &= \delta_{x,y}\,\delta^A_B, \\
    \acomm{\cre{x}{A}}{\cre{y}{B}}=0, \qquad &\acomm{\ann{x}{A}}{\ann{y}{B}} = 0.
\end{align}
The action of these operator spans the fermionic Fock space $\mathcal{F}_-(\mathcal{h})$ over a one particle space $\mathcal{h}$ given by 
\begin{gather}
    \mathcal{h} = \{f:\Sigma \longrightarrow\mathbb{C}^2:\; f(x)\neq 0 \text{ only for finitely many } x \} , \nonumber \\
    \scpr{f}{f'}_\mathcal{h}=\sum_{x\in\Sigma}\overline{f(x)} f'(x).
\end{gather}
The gravitational observables act on cylindrical functions which form the Ashtekar-Lewandowski Hilbert space
\begin{equation}
\mathcal{H}_\text{AL}=L^2(\Sigma,\text{d}\mu_{\text{AL}})
\end{equation}
via multiplication of holonomies, and derivations $X_S$, respectively
\begin{align}
\label{eq:Grep}
    \jrep{j}{h}_e \Psi[A] &= \jrep{j}{h}_e[A] \Psi[A], \\
    \int_S E_i \Psi[A] &= \mathrm{i} (X_S^i\Psi)[A].
\end{align}
The derivations $\mathrm{i} X_S^i$ can be further described in terms of spin-like operators $J$ \cite{Thiemann_2007}. We will give some details below, in \eqref{eq:P_Ff_quantum}, \eqref{eq:js}.
The combined system of matter fields and gravitational degrees of freedom is given by the tensor product
\begin{equation}
\mathcal{H}=\mathcal{H}_\text{AL}\otimes \mathcal{F}_-(\mathcal{h}),
\end{equation}
so we extend \eqref{eq:Grep}, \eqref{eq:Frep} in the obvious way.
The fermion field transforms under the gauge transformations of the gravitational sector as
\begin{align}
\label{eq:gauge_trafos_fermions}
    U_g \theta(x) U^{-1}_g &= g(x) \cdot \theta(x), \\
    U_g \pi_{\theta} (x) U^{-1}_g &= \pi_{\theta(x)}\cdot g^{-1}(x).
\end{align}
Angular momentum is not a covariant concept, but relative to the decomposition of (local) Lorentz transformations into rotations and boosts. In QFT on flat space, upon fixing a reference system by specifying a timelike normal $n$, one can recover the spin operator as the component of the generator of rotations, on spinor space. 

In \cite{Mansuroglu:2020acg} it was argued that the natural generalization of the spin density to curved spacetime is the generator of rotations, as defined by the internal version of the timelike normal vector $n$, on the Weyl spinor $\theta(x)$, 
\begin{equation}
\label{eq:internal_spin}
    \Sx{x}{i}= \frac{1}{2} \theta(x) \sigma^i \theta^\dagger(x). 
\end{equation}
Note that this generator is a tangent space scalar in the quantum theory, unlike the flat space generator, which is a density. This difference traces back to the special properties of the quantum theory and canonical loop quantum gravity \cite{Thiemann:1997rq}. 

In flat space, spin is a vector, and in the present context the local tangent space spin could in principle be defined as 
\begin{equation}
    \Sx{x}{a}= \frac{1}{2} \theta(x) \sigma^i e_i^a(x)\theta^\dagger(x).  
\end{equation}
However the triad $e_i^a(x)$ is not realized as a well defined operator on 
$\mathcal{H}_\text{AL}$, so while $ \Sx{x}{i}$ is a well defined operator, $ \Sx{x}{a}$ is not. Fortunately, there are interesting functionals of $\Sx{x}{a}$ that can be written in terms of $\Sx{x}{i}$ and gravitational field components that correspond to well defined operators. A trivial example is the modulus squared of the local spin, 
\begin{equation}
S^2 (x) = \Sx{x}{a}\Sx{x}{b} q_{ab}(x) = \Sx{x}{i}\Sx{x}{j} \delta_{ij},
\end{equation}
where $\delta$ is to be understood as the internal spatial metric. A more relevant example is the (area weighted) normal component of spin, relative to a surface $\mathcal{F}$, 
\begin{equation}
    S\cdot n{}_\mathcal{F}=\int_{\mathcal{F}}\Sx{x}{a}\,  q_{ab} (x)\, n^b(x)\,\dd A, 
\end{equation}
Where $q$ is the spatial metric, $n$ the unit normal to $\mathcal{F}$, and $\dd A$ its area element induced by $q$. We will consider this quantity in detail in section \ref{sec:spin_surface_normal}.
\subsection{Total spin and problems with spin correlations}
As a simple consequence of \eqref{eq:gauge_trafos_fermions}, \eqref{eq:internal_spin}, one has 
\begin{equation}
     U_g \Sx{x}{i} U^{-1}_g = \left(\pi_1(g(x))S(x)\right)^{i}
\end{equation}
and 
\begin{equation}
    \comm{\Sx{x}{i}}{\Sx{x}{j}}=\mathrm{i}\,\epsilon^{ij}{}_k\,\Sx{x}{k}.
\end{equation}

Since the geometry is curved, spins have to be parallel transported to be added. In the quantum theory, this can be done for example by using using the Ashtekar-Barbero connection. In \cite{Mansuroglu:2020acg} the parallel transport of spin is defined as
\begin{equation}
    \Se{e}{i}= \pi_{j=1}(h_e)^{i}{}_j\,\Sx{s(e)}{j} 
\end{equation}
where $s(e)$ is the starting point of $e$. Importantly, the parallel transported spin is again a quantum theoretical spin:
\begin{equation}
    \comm{\Se{e}{i}}{\Se{e}{j}}=\mathrm{i}\,\epsilon^{ij}{}_k\,\Se{e}{k}.
\end{equation}
Under a gauge transformations it transforms at the final point of $e$. Collecting several spins along edges $E=\{e_I\}$ that end at the same point $p$ yields the total spin 
\begin{equation}
    S_E^i:= \sum_I \Se{e_I}{i}.
\end{equation}
Some simple eigenstates of $S_E^2$ have been discussed in \cite{Mansuroglu:2020acg}. As an example, consider two-fermion state 
\begin{equation}
\label{eq:singlet}
    \Psi_2 = \big(h_{e_2}\theta^\dagger(p_2)\big)_{A}\;   \epsilon^{AB}\; \big(h_{e_1}\theta^\dagger(p_1)\big)_{B}\, \ket{0} 
\end{equation}
in which two fermionic excitations are connected by a holonomy to form a gauge invariant state, see figure \ref{fig:two_fermions}. As one might guess, this state turns out  to be an eigenstate of the total spin $S_{E}^2$ for $E={e_1,e_2}$, with the eigenvalue 0 \cite{Mansuroglu:2020acg}.
\begin{figure}
    \centering
    \includegraphics[width=\linewidth]{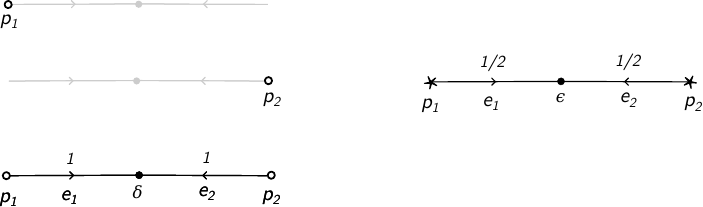}
    \caption{Graphical representation of the two-fermion state $\Psi_2$ (right) as well as spin operators $S^{k_i}(p_i)$ (upper left) and the total spin (lower left).}
    \label{fig:two_fermions}
\end{figure}
More generally, it can be shown that eigenstates to $S_E^2$ for any $E$ can be constructed by creating fermions at the beginning of (some of) the edges, and parallel transporting them, along the edges, to their common endpoint $p$. At $p$ they have to be coupled using particular intertwiners obtained by suitable action of the symmetric group \cite{Grosse_BSc}, 
\begin{equation}
    \Psi_n = \iota_{A_1 A_2 \ldots A_{|E|}}\prod_i \left(h_{e_i}  \theta^\dagger(s(e_i))\right)^{A_I} \; \ket{0}.  
\end{equation}
In this way one can show that the operator $S_E^2$ has the spectrum expected from quantum mechanics in flat space \cite{Grosse_BSc}. 

It turns out to be more difficult to define observables that measure non-trivial spin-spin correlations. The most obvious candidate, the $n$-point functions 
\begin{equation}
    W^{a_1\ldots a_n}(x_1,\ldots,x_n)
    :=\expec{\Sx{x_1}{a_1}\, \Sx{x_2}{a_2}\,\ldots\, \Sx{x_n}{a_n}}_{\Psi}
\end{equation}
in tangent space, for a state $\Psi\in \mathcal{H}$ are not accessible, because the triads $e^a_i(x)$ and hence the $\Sx{x}{a}$ are not well defined operators on $\mathcal{H}$.

The next best candidate, the $n$-point functions  
\begin{equation}
    W^{k_1\ldots k_n}(x_1,\ldots,x_n)
    :=\expec{\Sx{x_1}{k_1}\, \Sx{x_2}{k_2}\,\ldots\, \Sx{x_n}{k_n}}_{\Psi}
\end{equation}
in internal space \emph{are} well defined. However, for gauge invariant, and hence physically relevant, states $\Psi$, they show no non-local correlations whatsoever. 
\begin{prop}
    Let $n>0$, and $x_1,\ldots,x_n$ pairwise different points. Let $\Psi$ be gauge invariant, i.e., $U_g \Psi=\Psi$ for any gauge transformation $g$. Then 
\begin{equation}
    W^{k_1\ldots k_n}(x_1,\ldots,x_n)=0 \quad \text{ for all } \quad k_1,\ldots, k_n \in \{1,2,3\}^{n}.
\end{equation}    
\end{prop}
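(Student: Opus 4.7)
The plan is to exploit the tensorial covariance of the spin operators under gauge transformations combined with the hypothesis that $\Psi$ is gauge invariant. The key observation, recalled in the excerpt, is that
\begin{equation*}
    U_g\, S^i(x)\, U_g^{-1} = \pi_1(g(x))^i{}_j\, S^j(x),
\end{equation*}
i.e.\ the spin density transforms at $x$ under the spin-$1$ representation of SU(2), which is just the defining representation of SO(3). I would first insert $U_g^{-1} U_g$ between all factors and use $U_g \Psi = \Psi$ (and unitarity of $U_g$) to obtain, for any gauge transformation $g$,
\begin{equation*}
    W^{k_1\ldots k_n}(x_1,\ldots,x_n) = \prod_{I=1}^{n} \pi_1(g(x_I)^{-1})^{k_I}{}_{j_I}\; W^{j_1\ldots j_n}(x_1,\ldots,x_n).
\end{equation*}

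Next I would use that the points $x_1,\ldots,x_n$ are pairwise distinct, so the values $g(x_1),\ldots,g(x_n)$ can be chosen independently in $\mathrm{SU}(2)$ (gauge transformations in the kinematical setting of LQG are local, with no continuity constraint at the finite set of support points). Therefore $W$, viewed as a tensor in $(\mathbb{R}^3)^{\otimes n}$, is invariant under the action of $\mathrm{SO}(3)^n$ that rotates each tensor factor independently.

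The last step is purely representation theoretic: the defining $3$-dimensional representation of $\mathrm{SO}(3)$ is irreducible and nontrivial, hence contains no nonzero invariant vector. The invariant subspace of the outer tensor product representation of $\mathrm{SO}(3)^n$ on $(\mathbb{R}^3)^{\otimes n}$ is the tensor product of the single-factor invariant subspaces, which is therefore $\{0\}$ for any $n\geq 1$. Consequently $W^{k_1\ldots k_n}(x_1,\ldots,x_n)=0$.

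I do not expect a serious obstacle; the only point requiring a moment of care is the free independent choice of $g(x_I)$ at the distinct points, which is a standard property of the gauge transformations acting on the Ashtekar-Lewandowski Hilbert space together with the Fock sector. If one wanted to work with a distributional/smeared version instead, one would use a bump function construction around each $x_I$ (possible because the points are pairwise distinct) to realize any desired collection of values $g(x_I)\in \mathrm{SU}(2)$.
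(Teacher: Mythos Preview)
Your proof is correct and follows essentially the same approach as the paper: both exploit the covariance $U_g S^i(x) U_g^{-1}=\pi_1(g(x))^i{}_j S^j(x)$, the gauge invariance of $\Psi$, and the independence of $g(x_1),\ldots,g(x_n)$ at distinct points to reduce the statement to the absence of nonzero $\mathrm{SU}(2)^n$-invariants in the tensor product of $n$ copies of the spin-$1$ representation. The only cosmetic difference is that the paper phrases this last step on the Hilbert space side (the states $S^{k_1}(x_1)\cdots S^{k_n}(x_n)\Psi$ span an irreducible subspace orthogonal to the trivial representation containing $\Psi$), whereas you phrase it directly as the vanishing of an $\mathrm{SO}(3)^n$-invariant tensor in $(\mathbb{R}^3)^{\otimes n}$; these are equivalent formulations of the same representation-theoretic fact.
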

\begin{proof}
Let $X\coloneqq\{x_1,\ldots,x_n\}$. Consider the states 
\begin{equation}
    \Psi_{n}^{k_1\ldots k_n} := \Sx{x_1}{k_1}\, \Sx{x_2}{k_2}\,\ldots\, \Sx{x_n}{k_n} \Psi. 
\end{equation}
Because of the invariance of $\Psi$, the states form a multiplet under gauge transformations, 
\begin{equation}
\label{eq:multiplett}
   U_g\, \Psi_{n}^{k_1\ldots k_n} = \pi_1(g(x_1))\indices{^{k_1}_{m_1}} \ldots \pi_1(g(x_n))\indices{^{k_n}_{m_n}}\, \Psi_{n}^{m_1\ldots m_n}.  
\end{equation}
The set of gauge transformations $\mathcal{G}_{X}=\{g | g(x)=\one \text{ for all } x \in \Sigma / X \}$ that act trivially outside of $X$ forms a group isomorphic to SU(2)$^n$. This group is unitarily represented on $\mathcal{H}$. Since $\mathcal{G}_{X}$ is compact and unitary, $\mathcal{H}$ decomposes into orthogonal irreducible subspaces. 

Equation \eqref{eq:multiplett} shows that the subspace spanned by the $\Psi_{n}^{k_1\ldots k_n}$ is an invariant subspace, and a standard result in algebra shows that, because the action is a direct product of irreducibles of SU(2), it is irreducible for $\mathcal{G}_{X}\simeq \text{SU(2)}^n$. As such, it is orthogonal to all other irreducible subspaces of $\mathcal{H}$, in particular of the span of $\Psi$, on which $\mathcal{G}_{X}$ acts in the trivial representation. 
\end{proof}
This proposition shows, for example, that if the two-fermion singlet state $\Psi_2$ from \eqref{eq:singlet} were to have non-local correlations, they are not detected by $W_2$. Upon reflection, this result is not surprising, however. One can show that in a precise sense, the excited gravitational degrees of freedom created by the holonomies in $\Psi_2$ are maximally entangled to the fermionic excitation at their respective endpoints. Monogamy of entanglement \cite{Coffman:1999jd} then dictates that the fermions can not be entangled with each other in $\Psi_2$. 

There is another way that one might hope to find the non-local correlations between fermionic degrees of freedom, but it fails for closely related reasons. Consider the gravitational trace 
\begin{equation}
    \tr_G (\,\cdot\,) \coloneqq \sum_{S\in \mathcal{S}} \sscpr{S}{\,\cdot\,}{S}  
\end{equation}
where the sum is over the orthonormal basis $\mathcal{S}$ of generalized (i.e., non-gauge invariant) spin networks \cite{Ashtekar:1996eg}. One can show that this is a well defined map from pure density matrices $\pro{\psi}$ over $\mathcal{H}$ to density matrices over $\mathcal{F}_-$ that intertwines the action of the gauge transformations on $\mathcal{H}$ and $\mathcal{F}_-$ \cite{Kossmann_BSc},
\begin{equation}
     \tr_G (U_g \,\cdot\,U^\dagger_g) =  U_g^{\mathcal{F}_-} \tr_G (\,\cdot\,) (U^{\mathcal{F}_-}_g)^\dagger. 
\end{equation}
One can now ask what density matrices over $\mathcal{F}_-$ one can obtain as an image of $ \tr_G$. 
\begin{prop}
\label{pr:trace}
    Let $\Psi_n\in \mathcal{H}$ be an $n$-fermion state that is gauge invariant, i.e., $U_g \Psi=\Psi$ for any gauge transformation $g$. Then 
    \begin{equation}
     \rho^{\mathcal{F}_-} \coloneqq \tr_G (\pro{\Psi})
    \end{equation}
    is maximally mixed on the subspace spanned by the $n$ fermions, i.e., 
\begin{equation}
    \rho^{\mathcal{F}_-} = C \sum_{A_1,\ldots, A_n} \pro{\Phi_{A_1 \ldots A_n}}, \qquad  \Phi_{A_1 \ldots A_n} = \prod_{i}  \theta_{A_i}^\dagger(x_i) \ket{0}_{\mathcal{F}_-}
\end{equation}
for some constant $C$. 
\begin{proof}
    Since $\Psi$ is gauge invariant, and $\tr_G$ intertwines the action of gauge transformations on $\mathcal{H}$ and $\mathcal{F}_-$, $\rho^{\mathcal{F}_-}$ must be gauge invariant itself. $\tr_G$ does not change number or position of the fermionic excitations. Since the action of gauge trafos is local, there is, up to constant factors, only one density matrix containing $n$ fermions at the correct positions that is gauge invariant.  
\end{proof}    
\end{prop}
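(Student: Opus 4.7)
The plan is to prove Proposition \ref{pr:trace} by combining the intertwining property of $\tr_G$ with a Schur-type argument applied to the group of gauge transformations supported at the fermion positions.

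First I would establish that $\rho^{\mathcal{F}_-}$ is invariant under the fermionic gauge action. Starting from $U_g \Psi = \Psi$ and the stated intertwining identity $\tr_G(U_g \cdot U_g^\dagger) = U_g^{\mathcal{F}_-}\tr_G(\cdot)(U_g^{\mathcal{F}_-})^\dagger$, a direct substitution with $\cdot = \pro{\Psi}$ yields
\begin{equation}
    U_g^{\mathcal{F}_-}\, \rho^{\mathcal{F}_-}\, (U_g^{\mathcal{F}_-})^\dagger = \rho^{\mathcal{F}_-}
\end{equation}
for every gauge transformation $g$.

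Next I would constrain the support of $\rho^{\mathcal{F}_-}$. Since the spin operators $\Sx{x}{i}$ and the gravitational holonomies both commute with the fermionic number operator at each spatial point, and since $\tr_G$ acts only on the gravitational factor, the partial trace cannot change the occupation pattern: any matrix element $\sscpr{\Phi}{\rho^{\mathcal{F}_-}}{\Phi'}$ vanishes unless both $\Phi$ and $\Phi'$ carry exactly one fermion at each of the points $x_1,\dots,x_n$ and none elsewhere. Hence $\rho^{\mathcal{F}_-}$ is supported on the $2^n$-dimensional subspace $\mathcal{V}_X = \mathrm{span}\{\Phi_{A_1\ldots A_n}\}$.

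The final step is the Schur-type argument. Restricted to $\mathcal{V}_X \simeq (\mathbb{C}^2)^{\otimes n}$, the subgroup $\mathcal{G}_X \simeq \mathrm{SU}(2)^n$ of gauge transformations localized at $X$ acts as the tensor product of the defining representation on each factor. This representation is irreducible (it is the external tensor product of irreducibles of the distinct SU(2) factors). By Schur's lemma, any operator on $\mathcal{V}_X$ commuting with the whole of $\mathcal{G}_X$ is proportional to the identity on $\mathcal{V}_X$. Combining this with the invariance established in step one and the support property of step two forces
\begin{equation}
    \rho^{\mathcal{F}_-} = C \sum_{A_1,\ldots,A_n} \pro{\Phi_{A_1\ldots A_n}}
\end{equation}
for some constant $C$, as claimed.

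The main potential obstacle is the cleanness of the support argument: one needs to know that $\tr_G$ truly respects the fermionic position and number structure, which follows from the product form $\mathcal{H} = \mathcal{H}_\text{AL}\otimes\mathcal{F}_-(\mathcal{h})$ and the fact that the generalized spin network basis $\mathcal{S}$ used to define $\tr_G$ lives entirely in the gravitational factor. Once this is granted, the irreducibility of the local SU(2)$^n$ action does the rest, so the remainder is essentially an application of Schur's lemma.
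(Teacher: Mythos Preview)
Your proposal is correct and follows essentially the same route as the paper: both argue that gauge invariance of $\Psi$ together with the intertwining property of $\tr_G$ forces $\rho^{\mathcal{F}_-}$ to be gauge invariant, that the partial trace preserves the fermionic occupation pattern, and that locality of the gauge action then pins down $\rho^{\mathcal{F}_-}$ uniquely up to a constant. Your version simply makes explicit the Schur-lemma reasoning (irreducibility of the $\mathrm{SU}(2)^n$ action on $(\mathbb{C}^2)^{\otimes n}$) that the paper compresses into the single sentence ``there is, up to constant factors, only one density matrix containing $n$ fermions at the correct positions that is gauge invariant.''
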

Note that Prop. \ref{pr:trace} can also be read as a confirmation that fermions and gravitational degrees of freedom are maximally entangled wit each other: The entanglement entropy is maximal, since the result of the gravitational trace is maximally mixed. 

The two results above show that any notion of entanglement between fermions in gauge invariant states in loop quantum gravity must be somewhat subtle. It can not be detected straightforwardly in $n$-point functions or in reduced density matrices on the fermionic Hilbert space. There is, thus, the question as to how non-local entanglement of fermion spins, demonstrated so beautifully in various experiments, can be described in loop quantum gravity. The rest of this work is devoted to giving one possible answer. 

\section{Spins relative to surface normals} \label{sec:spin_surface_normal}
Carrying over the notion of the projection of a fermion spin \(\vec{S}\) into a certain direction \(\vec{n}\in\mathbb{R}^3\) from quantum mechanics to loop quantum gravity is not straightforward. The reason is that it is not known how to define operators transforming as tangent space vectors, such as $S^a(x)$, in LQG. Fixing an internal direction on the other hand runs counter to implementing the Gau\ss{} constraint which imposes \(SU(2)\) invariance at the vertices. As pointed out in \cite{Mansuroglu:2020acg}, a way out is to project the spin onto a reference field given by the normal of a surface. 
\subsection{Construction of the cosine operator}
As a classical starting point, consider the definition of the flux of a test function \(f^i:\Sigma\rightarrow\mathfrak{su}(2)\) (with compact support) through the surface \(\mathcal{F}\subseteq\Sigma\) \cite{Haggard:2023tnj}
\begin{equation}\label{eq:P_Ff}
    P_{\mathcal{F},f}(E)=\int_{\mathcal{F}} f^iE^a_i\varepsilon_{abc}\,\dd x^b\dd x^c .
\end{equation}
Here \(E^a_i=\sqrt{\abs{\operatorname{det}(q)}}\,e^a_i\) is the densitized triad on \(\Sigma\). %
We parametrize \(\mathcal{F}\) by \((s,t)\in\mathbb{R}^2\) which allows us to define the surface normal as
\begin{equation}
    n^a\coloneqq\frac{\tilde{n}^a}{\sqrt{\tilde{n}^2}}\,,\quad\tilde{n}^{a}\coloneqq q^{aa^\prime}\varepsilon_{a^\prime bc}\frac{\partial x^b}{\partial s}\frac{\partial x^c}{\partial t} .
\end{equation}
A few lines of calculation show that
\begin{equation}
    \tilde{n}^2=\frac{\operatorname{det}(q^{(2)})}{\operatorname{det}(q)}
\end{equation}
where \(q^{(2)}=\iota^*q\) is the metric on \(\mathcal{F}\) induced by the pullback of \(q\) via the inclusion \(\iota:\mathcal{F}\rightarrow\Sigma\). Hence
\begin{equation}\label{eq:integral f_q_n}
    P_{\mathcal{F},f}(E)=\int_{\mathcal{F}}f^aq_{ab}n^b\,\dd A
\end{equation}
with \(f^a=f^ie^a_i\) and \(\dd A\coloneqq\sqrt{\operatorname{det}(q^{(2)})}\,\dd s\,\dd t\).
\cref{eq:integral f_q_n} is nothing but the integrated scalar product on tangent space between the surface normal \(n^a\) and \(f^a\). Quantizing the flux operator following the standard procedure as e.g. done in \cite{Thiemann_2007} one obtains
\begin{equation}\label{eq:P_Ff_quantum}
    P_{\mathcal{F},f}[\mathcal{F}]=\frac{\kappa\gamma}{2}\sum_{x\in\mathcal{F}} f^i(x) \qty(J^{x,S\uparrow}_i-J^{x,S\downarrow}_i)
\end{equation}
where the quantum spin operators associated to the equivalence class \([p]\) of paths  beginning at \(x\) and sharing initial segments are defined by
\begin{equation}
\label{eq:js}
    J^{x,\mathcal{F}\uparrow}_i\coloneqq\sum_{\text{\([p]\) going up}} J^{x,p}_i\,, \quad J^{x,\mathcal{F}\downarrow}_i\coloneqq\sum_{\text{\([p]\) going down}} J^{x,p}_i \,.
\end{equation}
A priori, the sum in \cref{eq:P_Ff_quantum} ranges over all points in the surface. However when acting on a spin network function based on a graph \(\gamma\) only the intersections of \(\mathcal{F}\) with the vertices or edges of \(\gamma\) will yield a non-zero contribution, thus rendering the sum finite.

In order to arrive at the spin projection we simply specify \(f^i\) to be the spin operator \(S^i\) as defined in \cref{eq:internal_spin}. We arrive at 
\begin{equation}
    {S\cdot n}_{\mathcal{F}}=\frac{\kappa\gamma}{2}\sum_{x\in\mathcal{F}} S^i(x)\qty(J^{x,\mathcal{F}\uparrow}_i-J^{x,\mathcal{F}\downarrow}_i).
\end{equation}
Due to the definition of \(S^i\), only those vertices contribute where at least one fermion sits. 

Note that there is no ordering ambiguity as the spin operator commutes with the \(J_i\). 

As \(S^i(x)\) annihilates the zero fermion state and two fermion singlet states, in what follows we will always consider only one fermion at each vertex, i.e. the fermion state is proportional to \(c^\dagger\ket{0}\). In this case, as shown in \cite{Mansuroglu:2020acg}, the action of the spin operator reduces to multiplication with the quantum mechanical spin operator \(S^i_{\text{QM}}=\frac{\hbar}{2}\sigma_i\)%
\footnote{This statement follows from \(S^i(x)\ket{0}=0\) and \(\qty[S^i(x),{c^\dagger}^A(y)]=\frac{\hbar}{2}\,\delta_{x,y}{\sigma_i}\indices{^A_B}{c^\dagger}^B(y)\).}%
. By abuse of notation we rename \(S^i_{\text{QM}}\,\rightarrow\,S^i\).

In the sprit of the usual cosine formula \enquote{scalar product divided by norms} we normalize the spin flux by the norm of \(\vec{S}\) and divide by the area operator (we left out the \(x\) dependence to indicate the implicit sum over all \(x\in\mathcal{F}\)). Lastly, as \(S\cdot n_\mathcal{F}\) is only defined for a surface, we need to perform a limit which shrinks the surface to a point while keeping information about its orientation. Also, we symmetrize the naive expression such that the operator becomes self-adjoint.%
\footnote{The symmetrization procedure is in analogy to e.g. the cosine operator defined by S.\,Major in \cite{Major:1999mc} which we will also consider later on.}%
\begin{definition}
    We define the cosine operator at \(x_0\in\mathcal{F}\) to be 
    \begin{equation}
        \cos_{S\mathcal{F},x_0}=\frac{1}{2}\qty(C_{S\mathcal{F},x_0}+C_{S\mathcal{F},x_0}^\dagger) \,,\quad C_{S\mathcal{F},x_0}=\lim_{\epsilon\rightarrow0}\frac{1}{S[\mathcal{F}(\epsilon)]}S\cdot n_{\mathcal{F}(\epsilon)}\lim_{\epsilon\rightarrow0}\frac{1}{A[\mathcal{F}(\epsilon)]}
    \end{equation}
    where \(\mathcal{F}(\epsilon)\) is a disc centered around \(x_0\) with radius \(\epsilon\).
    The explicit formulas are
    \begin{equation}
        S[\mathcal{F}(\epsilon)]=\sum_{x\in\mathcal{F}(\epsilon)}\sqrt{\delta_{ij}S^i(x)S^j(x)}
    \end{equation}
    and
    \begin{equation}
        A[\mathcal{F}(\epsilon)]=\sum_{x\in\mathcal{F}(\epsilon)}\sqrt{\delta^{ij}\qty(J^{x,\mathcal{F}\uparrow}_i-J^{x,\mathcal{F}\downarrow}_i)\qty(J^{x,\mathcal{F}\uparrow}_j-J^{x,\mathcal{F}\downarrow}_j)} \,.
    \end{equation}
\end{definition}
Performing the limit we obtain an explicit form which allows for investigation on its spectrum:
\begin{align}
    C_{S\mathcal{F},x_0} &= \frac{1}{\sqrt{\delta_{ij}S^i(x_0)S^j(x_0)}}S^i(x_0)\qty(J^{x_0,\mathcal{F}\uparrow}_i-J^{x_0,\mathcal{F}\downarrow}_i) \times \\
    &\hphantom{{}={}\frac{1}{\sqrt{\delta_{ij}S^i(x_0)S^j(x_0)}}} \times\frac{1}{\sqrt{q^{ij}\qty(J^{x_0,S\uparrow}_i-J^{x_0,\mathcal{F}\downarrow}_i)\qty(J^{x_0,S\uparrow}_j-J^{x_0,\mathcal{F}\downarrow}_j)}} \\
    &= \frac{1}{\sqrt{\vec{S}^2(x_0)}}S^i(x_0)\qty(J^{x_0,\mathcal{F}\uparrow}_i-J^{x_0,\mathcal{F}\downarrow}_i)\frac{1}{A_{\mathcal{F},x_0}} .
\end{align}

%
\subsection{Action of the cosine operator on spin network functions}
As the \(J_i^{v,p_j}\) only act nontrivially on the gravity tensor factor in the Hilbert space and \(S^i(v)\) only acts on the fermionic tensor factor, we can split the analysis of \(\cos_{S\mathcal{F}}\) into two parts.

First, we investigate the action of the spin operator on a general cylindrical function. For simplicity we assume that the spin network graph of some spin network \(\Psi\) only intersects the surface once at \(v\), furthermore we assume that \(p_j\) starts at \(v\), then the action of the spin operator is given by
\begin{equation}
    J_i^{v,p_j}\Psi(A)=\mathrm{i}\hbar\left.\frac{\dd}{\dd s}\right|_{s=0}\Psi(h_{e_1}(A),\ldots,h_{e_j}(A)\mathrm{e}^{s\tau_i},\ldots,h_{e_n}(A)) \,.
\end{equation}
We can rewrite the action on spin network functions \(T_{\gamma,\vec{j},\vec{\iota}}\) by abbreviating
\begin{equation}
    \tensor{\qty[\overset{(j_{e_i})}{\pi}(h_{e_i}(A))]}{^{m_i}_{n_i}}\equiv\tensor{\qty[\pi(h)]}{^{m_i}_{n_i}} \,,\quad h_i\equiv h_{e_i}(A) \,,
\end{equation}
as
\begin{align}
    J_i^{v,p_j}T_{\gamma,\vec{j},\vec{\iota}}(A) &=
    \ldots\cdot\qty(\mathrm{i}\hbar\left.\frac{\dd}{\dd s}\right|_{s=0}\tensor{\qty[\pi(h_j)\mathrm{e}^{s\tau_i}]}{^{m_j}_{n_j}})\cdot\ldots\cdot
    (\iota_v)^{n_1\dots n_j\dots n_n} \\
    &= 
    \ldots\cdot\tensor{\qty[\pi(h_j)]}{^{m_j}_{n_j}}\cdot\ldots\cdot
    \qty(\mathrm{i}\hbar\tensor{\qty[\pi(\tau_i)]}{^{n_j}_{n^\prime_j}}\,(\iota_v)^{n_1\dots n^\prime_j\dots n_n}) \,.
\end{align}
A similar trick can be performed for the action of the spin operator, namely we have 
\begin{equation}
    S^i(v)\qty(\ldots\cdot(\iota_{v^\prime})_{m_1\ldots m_na} \, (c^\dagger)^{a}(v^\prime)\ket{0})
    = \ldots\cdot \qty(\delta_{v,v^\prime}\,\mathrm{i}\hbar\,\tensor{{\tau_i}}{^b_a}\,(\iota_v)_{m_1\ldots m_nb})(c^\dagger)^{a}(v)\ket{0} \,.
\end{equation}
First, note that at the level of the intertwiners, \(J^{v,p}\) and 
\(S^i(v)\) act in exactly the same way on the indices of \(\iota_v\).%
\footnote{The fact that \(S_i\) acts as a \(J_i\) where the edge associated to the operator is spin \(\frac{1}{2}\) is somewhat expected as fermions couple to gravity via virtual spin \(\frac{1}{2}\) edges.} %
Hence we can naturally include the fermionic spin operator into the \(SU(2)\) graphical calculus as e.g. introduced in \cite{alesci2023graphicalcalculusspinnetworks}. Secondly, note that the objects in the bracket are no intertwiners purely in the \(m\) and \(a\) indices anymore as they are not invariant under gauge transformations%
\footnote{In fact, as the index \(i\) suggests, these objects transform in the \(j=1\) representation of \(SU(2)\).}%
. This is however not a problem as we will never apply the two operators alone but only their square \(\delta^{ik}J^{v,p}_iJ^{v,p}_k\) or \(\delta_{ik}S^i(v)S^k(v)\). This means \(\iota_v\) is now acted on by \(\pi^{(j)}(\tau_i)^2=-j(j+1)\mathbb{1}\) which clearly maps intertwiners to intertwiners. However, care needs to be taken when acting with the square of the sum of two operators belonging to different edges \(e_p\) and \(e_q\) (or to the virtual fermion edge and another edge \(e_q\)). In this case the action is diagonal only in a basis where \(j_p\) (or \(\frac{1}{2}\)) and \(j_q\) are coupled to some \(k\), with eigenvalue \(\hbar^2 k(k+1)\). On a technical level, this means we might have to recouple the intertwiner in order to change the coupling scheme using the formulas e.g. presented in \cite{alesci2023graphicalcalculusspinnetworks}. 

As we have just shown, it suffices to work only at the level of the intertwiners which is why these will be our central objects of study, both for calculating the spectrum of the cosine operator and for constructing the Bell test setup.
\subsection{Spectrum of the cosine operator}\label{sec:spec cos}
\begin{figure}
    \centering
    \begin{subfigure}[t]{0.47\textwidth}
         \centering
         \begin{tikzpicture}[scale=2]
            \filldraw[fill=gray!80,draw=gray!80] (-1.2,-0.01) rectangle (1.2,0.01);
            \node[circle,fill=black,draw,scale=0.25] (bullet) at (0,0) {};
            \foreach \a in {0,10,12,15,25,45,50,60,70,85,95,97,100,120,125,140,155,165,167,180,-20,-24,-35,-37,-50,-85,-62,-70,-80,-87,-92,-98,-105,-110,-112,-140,-145,-157,-165,-170,-173,-178}{
            \draw[gray!40!black,-] (bullet) -- (\a:1);
            }
            \node (S) at (-1.1,0.1) {\(\color{gray!80} S\)};
            \node[below left = 0.25 and 0.1] {\(v\)};
         \end{tikzpicture}
         \caption{General links at an arbitrary vertex \(v\).}
         \label{fig:udt_general}
     \end{subfigure}
     \hfill
     \begin{subfigure}[t]{0.47\textwidth}
         \centering
         \begin{tikzpicture}[scale=2]
            \filldraw[fill=gray!80,draw=gray!80] (-1.2,-0.01) rectangle (1.2,0.01);
            \node (S) at (-1.1,0.1) {\(\color{gray!80} S\)};
            \node[circle,fill=black,draw,scale=0.25] (bullet) at (0,0) {};
            \node[below left = -0.05] {\(v\)};
            \draw[thick,-]  (bullet) -- (0,1) node[above]{\(j^\uparrow\)};
            \draw[thick,-]  (bullet) -- (0,-1) node[below]{\(j^\downarrow\)};
            \draw[thick,-]  (bullet) -- (1,0) node[above right]{\(j^\rightarrow\)};
         \end{tikzpicture}
         \caption{Recoupled links according to their position relative to the surface \(S\).}
         \label{fig:udt_definition}
     \end{subfigure}
     \caption{Chosen recoupling prescription in order to calculate the action of the cosine operator at a vertex \(v\).}
\end{figure}
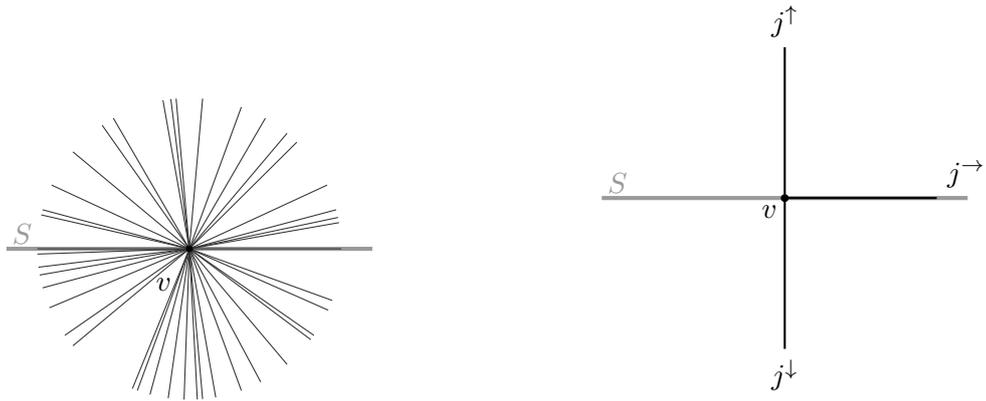

For calculating the spectrum we fix a vertex \(v\) and a surface \(S\) containing only this vertex for the rest of this subsection. We choose the coupling scheme
\begin{equation}
    \ket{l}\coloneqq\scoupling{\uparrow}{0.5}{\downarrow}{\rightarrow}[l] \,.
\end{equation}
We use the conventions of \cite{alesci2023graphicalcalculusspinnetworks}, however we work with normalized intertwiners. In order not having to deal with the degenerate case of a one dimensional Hilbert space we exclude the two cases where \(j^{\rightarrow}+\frac{1}{2}=\qty|j^{\uparrow}-j^{\downarrow}|\) and \(j^{\rightarrow}-\frac{1}{2}=j^{\uparrow}+j^{\downarrow}\). The advantage of this scheme is the fact that the area operator is diagonal in this basis. Upon rewriting the numerator of \(C_{S\mathcal{F},v}\) as sums of squares of angular momentum operators and using recoupling theory one finds that 
\begin{equation}\label{eq:pm_cos_pm}
    \mel**{\pm}{\cos_{S\mathcal{F},v}}{\pm}=\mp\frac{1}{\sqrt{3}}\frac{C}{\sqrt{B(\pm1)}}
\end{equation}
and
\begin{equation}\label{eq:pm_cos_mp}
    \mel**{\pm}{\cos_{S\mathcal{F},b}}{\mp} = -\frac{1}{\sqrt{3}}\sqrt{A-C^2}\qty(\frac{1}{2}\frac{1}{\sqrt{B(+1)}}+\frac{1}{2}\frac{1}{\sqrt{B(-1)}}) .
\end{equation}
We defined for later convenience 
\begin{align}
    A &\coloneqq \frac{1}{2}\qty(\frac{J^\uparrow}{J^\rightarrow})^2+\frac{1}{2}\qty(\frac{J^\downarrow}{J^\rightarrow})^2-\frac{1}{4} , \\
    B(r) &\coloneqq \frac{1}{2}\qty(\frac{J^\uparrow}{J^\rightarrow})^2+\frac{1}{2}\qty(\frac{J^\downarrow}{J^\rightarrow})^2-\frac{1}{4}-\frac{r}{4}\frac{1}{J^\rightarrow}-\frac{1}{4}\frac{1}{{J^\rightarrow}^2} , \\
    C &\coloneqq \frac{1}{2}\qty(\frac{J^\uparrow}{J^\rightarrow})^2-\frac{1}{2}\qty(\frac{J^\downarrow}{J^\rightarrow})^2
\end{align}
and \(J^{\uparrow,\downarrow,\rightarrow}\coloneqq j^{\uparrow,\downarrow,\rightarrow}+\frac{1}{2}\). As the cosine operator is symmetric \cref{eq:pm_cos_mp} has to be real, we conclude that \(A\geq C^2\). 

Calculating the spectrum is now only a matter of linear algebra.
\begin{prop}
    The eigenvalues of \(\cos_{S\mathcal{F},v}\) are 
    \begin{equation}
        \lambda_\pm=\pm\frac{1}{\sqrt{3}}\qty[\frac{1}{\sqrt
        {B(+1)}}\frac{\sqrt{A}\mp C}{2}+\frac{1}{\sqrt{B(-1)}}\frac{\sqrt{A}\pm C}{2}] \,.
    \end{equation}
\end{prop}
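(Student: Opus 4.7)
The plan is to recognize that the matrix elements given in equations \eqref{eq:pm_cos_pm} and \eqref{eq:pm_cos_mp} already express $\cos_{S\mathcal{F},v}$, on the invariant subspace spanned by $\ket{+}$ and $\ket{-}$, as a symmetric $2\times 2$ matrix whose eigenvalues are therefore obtained by a direct computation. First, I would collect the entries and write
\begin{equation*}
    \cos_{S\mathcal{F},v} \;\widehat{=}\; \frac{1}{\sqrt{3}}\begin{pmatrix} -\dfrac{C}{\sqrt{B(+1)}} & b \\[6pt] b & \dfrac{C}{\sqrt{B(-1)}} \end{pmatrix}, \qquad b \coloneqq -\tfrac{1}{2}\sqrt{A-C^2}\left(\tfrac{1}{\sqrt{B(+1)}}+\tfrac{1}{\sqrt{B(-1)}}\right),
\end{equation*}
and apply the standard formula $\lambda_\pm = \tfrac{1}{2}(a+d) \pm \sqrt{\tfrac{1}{4}(a-d)^2 + b^2}$ for the eigenvalues of a symmetric $2\times 2$ matrix.

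The core step is then a small algebraic miracle: the $C^2$ contribution from $\tfrac{1}{4}(a-d)^2$ combines with the $A-C^2$ inside $b^2$ so that
\begin{equation*}
    \tfrac{1}{4}(a-d)^2 + b^2 \;=\; \tfrac{A}{12}\left(\tfrac{1}{\sqrt{B(+1)}}+\tfrac{1}{\sqrt{B(-1)}}\right)^2.
\end{equation*}
This makes the square root clean, giving $\sqrt{\tfrac{1}{4}(a-d)^2+b^2} = \tfrac{\sqrt{A}}{2\sqrt{3}}\bigl(\tfrac{1}{\sqrt{B(+1)}}+\tfrac{1}{\sqrt{B(-1)}}\bigr)$, where positivity is guaranteed by $A \geq C^2$, which the text has already deduced from hermiticity. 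Combining with $\tfrac{1}{2}(a+d)=\tfrac{C}{2\sqrt{3}}\bigl(\tfrac{1}{\sqrt{B(-1)}}-\tfrac{1}{\sqrt{B(+1)}}\bigr)$ and regrouping the $B(+1)$ and $B(-1)$ terms yields exactly the form in the proposition.

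The only nontrivial obstacle I anticipate is a bookkeeping one, namely making sure that the sign conventions and the assignment of $\ket{+}$ vs.\ $\ket{-}$ to $l = j^\rightarrow\pm\tfrac{1}{2}$ are tracked consistently, so that the eigenvalues come out assigned to the labels $\lambda_+$ and $\lambda_-$ in the way the proposition states. Since the degenerate cases $j^\rightarrow+\tfrac{1}{2}=|j^\uparrow-j^\downarrow|$ and $j^\rightarrow-\tfrac{1}{2}=j^\uparrow+j^\downarrow$ have already been excluded, both $B(\pm 1)$ are strictly positive and the square roots are well defined, so no further case distinctions are required. No recoupling beyond what has been set up in the preceding subsection is needed for this proof — the entire argument is a two-by-two diagonalisation once the matrix elements are in hand.
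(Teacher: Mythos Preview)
Your approach is correct and matches the paper's, which simply states that ``calculating the spectrum is now only a matter of linear algebra'' after displaying the matrix elements \eqref{eq:pm_cos_pm}--\eqref{eq:pm_cos_mp}; your explicit $2\times 2$ diagonalisation and the cancellation of $C^2$ against the $A-C^2$ in $b^2$ is exactly the intended computation. One minor notational slip: you factor $\tfrac{1}{\sqrt{3}}$ out of the matrix and define $b$ without it, but your subsequent expressions $\tfrac{1}{4}(a-d)^2+b^2=\tfrac{A}{12}(\cdots)^2$ and $\tfrac{1}{2}(a+d)=\tfrac{C}{2\sqrt{3}}(\cdots)$ treat $a,d,b$ as the full entries including the $\tfrac{1}{\sqrt{3}}$ --- harmless, but tidy it up in the final write-up.
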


First, one might notice that the eigenvalues are a sum of two semi-positive terms as \(\sqrt{A}\geq \abs{C}\). Hence the kernel of the operator comprises of those states for which \(C=0\) and \(A=0\), i.e. \(2j^\uparrow=2j^\downarrow=j^\rightarrow\).

In the limit of large \(j^\uparrow,j^\downarrow\gg j^\rightarrow\) we have \(B(+1),B(-1)\approx A\) and thus \(\lambda_\pm=\pm\frac{1}{\sqrt{3}}\). This result matches exactly the two angles one would expect from a purely quantum mechanical calculation of the eigenvalues of the (normalized) spin operator projected onto an arbitrary direction. In fact, these are the maximally allowed angles which follows from \(B(+1)\leq B(-1)\) and the fact that in our case \(j^\rightarrow\leq j^\uparrow+j^\downarrow+\frac{1}{2}\).
\begin{prop}
The eigenvalues of \(\cos_{S\mathcal{F},v}\) are bound from above and below as
    \begin{equation}
        \frac{1}{\sqrt{3}}\sqrt{\frac{A}{B(-1)}} \leq \abs{\lambda_\pm} \leq \frac{1}{\sqrt{3}}\sqrt{\frac{A}{B(+1)}}\leq\frac{1}{\sqrt{3}} \,.
    \end{equation}
\end{prop}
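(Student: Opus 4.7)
The plan is to read off the bounds directly from the explicit formula for $\lambda_\pm$ by exploiting monotonicity in the denominators $\sqrt{B(\pm 1)}$, together with the non-negativity of the numerators. First I would check by direct subtraction that $B(-1)-B(+1) = \tfrac{1}{2J^\rightarrow}>0$, so $B(+1)\leq B(-1)$ and consequently $1/\sqrt{B(+1)}\geq 1/\sqrt{B(-1)}$. Combined with the reality condition $\sqrt{A}\geq |C|$ already noted below \eqref{eq:pm_cos_mp}, both factors $\sqrt{A}\mp C$ and $\sqrt{A}\pm C$ appearing in
\[
|\lambda_\pm| \;=\; \frac{1}{2\sqrt{3}}\left[\frac{\sqrt{A}\mp C}{\sqrt{B(+1)}}+\frac{\sqrt{A}\pm C}{\sqrt{B(-1)}}\right]
\]
are non-negative and their sum equals $2\sqrt{A}$.

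For the upper bound, I would replace $1/\sqrt{B(-1)}$ by the larger quantity $1/\sqrt{B(+1)}$. The two numerators then combine to $2\sqrt{A}$ and yield
\[
|\lambda_\pm| \;\leq\; \frac{1}{\sqrt{3}}\sqrt{\frac{A}{B(+1)}}.
\]
Symmetrically, replacing $1/\sqrt{B(+1)}$ by the smaller $1/\sqrt{B(-1)}$ provides the lower bound $|\lambda_\pm|\geq \tfrac{1}{\sqrt{3}}\sqrt{A/B(-1)}$. Both of these steps are routine once monotonicity and non-negativity have been established; no recoupling or representation-theoretic input is needed beyond what has already been used to obtain the matrix elements.

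The final inequality $\tfrac{1}{\sqrt{3}}\sqrt{A/B(+1)}\leq \tfrac{1}{\sqrt{3}}$ is the one I expect to require actual work, since it is equivalent to the non-trivial algebraic claim $A\leq B(+1)$. Unfolding the definitions, this reduces to a polynomial inequality in $J^\uparrow,J^\downarrow,J^\rightarrow$, for which the relevant input should be the triangular admissibility $j^\rightarrow\leq j^\uparrow+j^\downarrow+\tfrac{1}{2}$ (equivalently $J^\rightarrow\leq J^\uparrow+J^\downarrow$) together with the excluded degenerate cases. The main obstacle is precisely this step: the shift from $A$ to $B(+1)$ is of order $1/J^\rightarrow$ and $1/(J^\rightarrow)^2$, while the admissibility constraint is linear in the $j$'s, so one needs to compare contributions carefully. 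I would handle it by multiplying through by $(J^\rightarrow)^2$ to obtain a polynomial inequality, separate the boundary case $J^\rightarrow=J^\uparrow+J^\downarrow$ from the strict one, and use $(J^\uparrow+J^\downarrow)^2\geq (J^\rightarrow)^2$ together with $J^\uparrow,J^\downarrow\geq \tfrac{1}{2}$ to close the estimate. Once this is settled, the chain of inequalities in the proposition follows by concatenation.
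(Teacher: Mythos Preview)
Your derivation of the two--sided estimate $\tfrac{1}{\sqrt3}\sqrt{A/B(-1)}\le|\lambda_\pm|\le\tfrac{1}{\sqrt3}\sqrt{A/B(+1)}$ is exactly what the paper has in mind: non--negativity $\sqrt A\ge |C|$ together with the ordering $B(+1)\le B(-1)$ lets you push both denominators to one extreme, after which the numerators collapse to $2\sqrt A$. This part is fine and coincides with the paper's brief justification preceding the proposition.

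The plan for the final inequality, however, cannot succeed. You correctly reduce $\tfrac{1}{\sqrt3}\sqrt{A/B(+1)}\le\tfrac{1}{\sqrt3}$ to the algebraic claim $A\le B(+1)$ and propose to squeeze it out of the admissibility condition $J^\rightarrow\le J^\uparrow+J^\downarrow$. But subtracting the defining expressions gives
\[
A-B(+1)\;=\;\frac{1}{4J^\rightarrow}+\frac{1}{4\,(J^\rightarrow)^2}\;>\;0
\]
identically, so $A>B(+1)$ for \emph{every} admissible configuration; no triangular inequality will reverse a strict inequality that holds independently of $J^\uparrow,J^\downarrow$. A direct check with $j^\uparrow=j^\downarrow=j^\rightarrow=1$ yields $A=\tfrac34$, $B(+1)=\tfrac{17}{36}$, and in fact $|\lambda_\pm|=\tfrac{3}{2}\bigl(\tfrac{1}{\sqrt{17}}+\tfrac{1}{\sqrt{29}}\bigr)\approx 0.642>\tfrac{1}{\sqrt3}\approx 0.577$. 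The step you flagged as ``the main obstacle'' is therefore not merely hard but impossible as stated; either the last inequality of the proposition or one of the definitions of $A$, $B(r)$ carries a sign or offset error that you should isolate before attempting a proof.
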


For constructing the Bell test setup we will not consider the above general case but only \(j^{\uparrow,\downarrow}\neq0\), \(j^{\rightarrow}=0\). Now the intertwiner at \(x\) is unique and we must have \(j^{\downarrow}=j^{\uparrow}+\frac{p}{2}\). The cosine operator acts by multiplication with 
\begin{equation}\label{eq:sqrt3_cos}
        \frac{p}{\sqrt{3}}\frac{J^{\uparrow}+\frac{p}{4}}{\sqrt{(J^{\uparrow})^2+\frac{p}{2}J^{\uparrow}-\frac{5}{16}}} \quad\xrightarrow{\quad J^\uparrow\gg\frac{1}{2} \quad}\quad \frac{p}{\sqrt{3}}\qty(1+\frac{3}{16}\frac{1}{(J^{\uparrow})^2}) \,.
    \end{equation}
In the context of spin foams, the limit \(j\rightarrow\infty\), \(\hbar\rightarrow0\) with fixed \(\hbar j\) can be considered the semiclassical limit of the spinfoam amplitude, see for example \cite{Barrett:2009gg}.   We will also work in the regime \(j^{\uparrow}\gg\frac{1}{2}\) which leads us to not work with the full cosine operator but only with its sign. We define the corresponding operator via its action on the states.
\begin{definition}
    Consider a vertex \(v\) with intertwiner \(\iota_v\) of a spin network and a surface \(S\) intersecting the vertex such that no edges are tangential to \(S\). Recouple all the spins associated to edges lying above, below \(S\) to \(j^\uparrow\), \(j^\downarrow\) respectively. We must have that \(j^\downarrow=j^\uparrow+\frac{p}{2}\). Write \(\ket{k_1,\ldots,j^\uparrow,j^\uparrow+\frac{p}{2},\ldots,k_n}\) for the intertwiner where the \(k_i\) are the internal spins specifying the chosen basis. The sign operator is defined to be the sign of the cosine operator
    \begin{equation}
        \operatorname{sgn}_{S\mathcal{F},v} \coloneqq \operatorname{sgn}(\cos_{S\mathcal{F},v}) \,,
    \end{equation}
    it acts on the intertwiner as
    \begin{equation}\label{eq:sgn operator definition}
        \operatorname{sgn}_{S\mathcal{F},v} \ket{k_1,\ldots,j^\uparrow,j^\uparrow+\frac{p}{2},\ldots,k_n} = p \ket{k_1,\ldots,j^\uparrow,j^\uparrow+\frac{p}{2},\ldots,k_n} \,.
    \end{equation}
    If the recoupling spins are not relevant for a calculation we drop them and write \(\ket{p}\equiv\ket{k_1,\ldots,j^\uparrow,j^\uparrow+\frac{p}{2},\ldots,k_n}\).
\end{definition}

Note that the sign operator is a self-adjoint, trace-free operator on \(\mathbb{C}^2\simeq\operatorname{span}(\ket{+},\ket{-})\). It may thus be written as \(\vec{n}\cdot\vec{\sigma}\). In the basis \(\ket{\pm}\) it acts just as \(\sigma_z\) (or \(-\sigma_z\)).

In the large \(j\) limit, the sign operator is nothing but the quantum mechanical spin projection operator (up to normalization), hence its eigenvalue tells us about the orientation of the fermion spin with respect to the surface normal.

%
\section{Bell states}
Having the sgn operator at our disposal we are finally able to mimic the famous Bell test for entanglement between two fermions in a gauge invariant setting of loop quantum gravity. To this end, we first recall the original ideas of Bell and Clauser, Horne, Shimony, Holt (CHSH). 
\label{sec:bell}
\subsection{Classcial Bell argument and CHSH inequality}\label{sec:class Bell inequ}
Consider two observers Alice (A) and Bob (B), possibly spacelike separated, which each have one detector at their disposal. A pair of entangled qubits is created and each of the particles is sent to one of the observer. Each A and B can perform a measurement \textit{independent of the measurement outcome of the partner} based on one of two detector settings with the only two outcomes of one fixed setting being \(+1\) or \(-1\). In our case the measurement consists of measuring the projection of the spin of the particle onto one of two directions (labelled by \(i\)) \(\vec{n}^\text{A}_i\), \(\vec{n}^\text{B}_i\) of their choice. We denote the Hilbert space which we use to describe the particle and the measurement at A, B  by \(\mathcal{H}_{\text{A},\text{B}}=\mathbb{C}^2\). Here we simplify the situation and consider \(\vec{n}^\text{A}_0=\vec{n}^\text{B}_0=\vec{n}_0\). We write \(A_{0,1}\) and \(B_{0,1}\) for the operator describing the measurement of A's or B's detector into the direction 0 and 1 respectively. 

Consider now the expectation value of the CHSH correlation operator
\begin{equation}
    B=A_0\otimes B_0+A_0\otimes B_1+A_1\otimes B_0-A_1\otimes B_1
\end{equation}
on \(\mathcal{H}=\mathcal{H}_\text{A}\otimes\mathcal{H}_\text{B}\). The CHSH inequality first derived in \cite{Clauser:1969ny} states that \(|\langle B\rangle|\leq2\) if one assumes only classical correlations. By that we mean correlations where the measurement outcome is determined by the detector setting and a (so-called hidden) variable \textit{a priori} to the actual measurement. Famously, one can construct quantum states for which \(|\langle B\rangle|>2\). As those pairs of qubits violate the CHSH inequality they have to be quantum mechanically entangled. For a concrete example of CHSH violation we work in the Bell basis given by
\begin{align}
    \ket{\Phi^{\pm}} &= \frac{1}{\sqrt{2}}\qty(\ket{+}_\text{A}\otimes\ket{+}_\text{B}\pm\ket{-}_\text{A}\otimes\ket{-}_\text{B}) \\
    \ket{\Psi^{\pm}} &= \frac{1}{\sqrt{2}}\qty(\ket{+}_\text{A}\otimes\ket{-}_\text{B}\pm\ket{-}_\text{A}\otimes\ket{+}_\text{B}) \,.
\end{align}
Consider the case of \(A_0=B_0=-\sigma^z\) and \(A_1=A^z\sigma_z+A^x\sigma_x\), \(B_1=B^z\sigma_z+B^x\sigma_x\). Then there exist eigenvectors of the CHSH operator \(B\), namely
\begin{equation}
        \frac{1}{\sqrt{1+{\chi_{\mp}}^2}}\qty(\ket{\Phi^{-}}-\chi_{\mp}\ket{\Psi^{+}})
    \end{equation}
    where
    \begin{equation}
        \chi_{\pm}=\frac{A^z B^z+A^z+B^z-2\sqrt{1+\abs{A^x}\abs{B^x}}-(1+\abs{A^x} 
        \abs{B^x})}{B^x(1+A^z)+A^x(1+B^z)} ,
    \end{equation}
whose eigenvalues are \(\lambda=\pm2\sqrt{1+\abs{A^x}\abs{B^x}}\), \(\abs{\lambda}\geq2\). Hence a measurement in this state would yield an outcome which clearly cannot be explained by hidden variables but only by entanglement of the qubit pair. For a more extensive analysis of the eigenvectors and eigenvalues of the CHSH operator we refer to \cite{CHEFLES19974}.

Following the idea of the Bell inequality violation implying entanglement, the goal of the following sections is to explicitly construct the Hilbert spaces \(\mathcal{H}_{\text{A},\text{B}}\) and the operators \(A_{0,1}\), \(B_{0,1}\) in the framework of loop quantum gravity.
\subsection{Three-valent LQG setup}\label{sec:3valent setup}
The main insight of this section is the simple correspondence between the \(\ket{\pm}\) states of the Bell test and of the sign operator introduced in \cref{eq:sgn operator definition}. We realize the \(\mathbb{C}^2\) intertwiner states by first coupling at least three external spins \(j^{\text{ext}}_1,\ldots,j^{\text{ext}}_n\) with the fermion spin to intertwiners in \(\operatorname{Inv}(\mathcal{H}_{j^{\text{ext}}_1}\otimes\ldots\otimes\mathcal{H}_{j^{\text{ext}}_n}\otimes\mathcal{H}_{\frac{1}{2}})\). We label the basis elements of this vector space by the internal coupling spins \(k_i\). As this space is very high dimensional we fix as many internal spins as necessary in order to arrive at \(\operatorname{Inv}(\mathcal{H}_{k_1}\otimes \mathcal{H}_{k_2}\otimes \mathcal{H}_{k_3}\otimes\mathcal{H}_\frac{1}{2})\simeq\mathbb{C}^2\). We exclude the case of \(k_3\pm\frac{1}{2}=\abs{k_1\pm k_2}\) as the space would otherwise be only one dimensional. A natural coupling scheme which tells us which \(k_i\) to choose is determined by the orientation of the edges intersecting a vertex with respect to a given surface. This is motivated by the fact that the sign operator is diagonal in a basis where all the spins associated to the edges coming from above, below are recoupled to \(j^\uparrow\), \(j^\downarrow\) respectively (we assume \(j^\rightarrow=0\) in the following, otherwise one would \enquote{tweak} the surface such that this requirement holds). Introducing not just one but two surfaces \(S_0\) and \(S_1\) at a fixed vertex \(v\) partitions the set of edges meeting at \(v\) into four regions \(I\), \(II\), \(III\), \(IV\). See \cref{fig:2_surfaces_definition} for a graphical representation. Naturally, we choose a coupling scheme where all the spins in one wedge are recoupled to \(j_1\), \(j_2\), \(j_3\) or \(j_4\) accordingly. Even after fixing all the internal spins in every wedge, \(\operatorname{Inv}(\mathcal{H}_{j_1}\otimes \mathcal{H}_{j_2}\otimes \mathcal{H}_{j_3}\otimes\mathcal{H}_{j_4}\otimes\mathcal{H}_\frac{1}{2})\) is in general still too high dimensional. The two ways out are to set one of the spins to zero, as done in this first subsection, or to fix one of the remaining recoupling spins, as done in the next subsection.
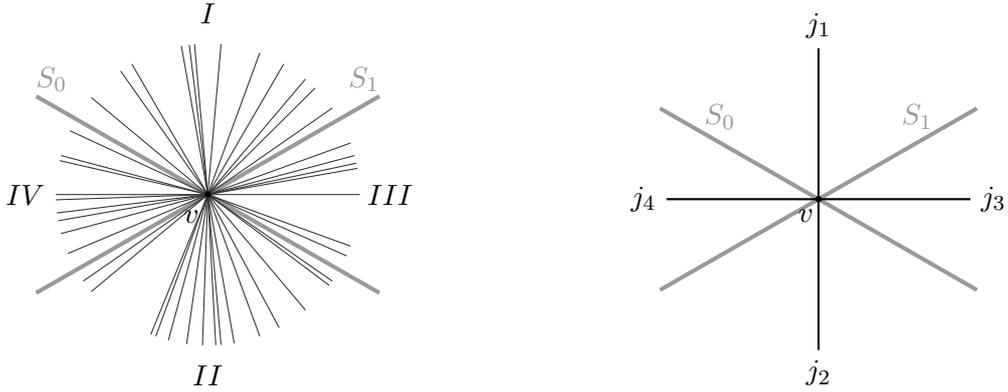
\begin{figure}
    \centering
    \begin{subfigure}[b]{0.47\textwidth}
        \centering
        \begin{tikzpicture}[scale=2]
          \coordinate (bullet) at (0,0);
          \filldraw[fill=gray!80,draw=gray!80,rotate around={30:(bullet)}] (-1.3,-0.01) rectangle (1.3,0.01);
          \filldraw[fill=gray!80,draw=gray!80,rotate around={-30:(bullet)}] (-1.3,-0.01) rectangle (1.3,0.01);
          \node[above = 0] (S0) at (149:1.2) {\(\color{gray!80} S_0\)};
          \node[above = 0] (S1) at (31:1.2) {\(\color{gray!80} S_1\)};
          \foreach \a in {0,10,12,15,20,35,45,50,60,70,85,95,97,100,120,125,140,155,165,167,180,-20,-24,-35,-37,-50,-85,-62,-70,-80,-87,-92,-98,-105,-110,-112,-140,-145,-157,-165,-170,-173,-178}{
            \draw[gray!40!black,-] (bullet) -- (\a:1);
          }
          \draw[fill=black,draw=black] (bullet) circle (0.5pt);
          \node[below left = 2pt and -0.5pt of bullet] {\(v\)};
          \node (I) at (90:1.2) {\contour{white}{\(I\)}}; 
          \node (II) at (-90:1.2) {\contour{white}{\(II\)}}; 
          \node (III) at (0:1.2) {\contour{white}{\(III\)}}; 
          \node (IV) at (180:1.2) {\contour{white}{\(IV\)}}; 
        \end{tikzpicture}
        \caption{General links at an arbirary vertex \(v\) with two surfaces \(S_{0,1}\).}
        \label{fig:2_surfaces_general}
     \end{subfigure}
     \hfill
     \begin{subfigure}[b]{0.47\textwidth}
        \centering
        \begin{tikzpicture}[scale=2]
          \coordinate (bullet) at (0,0);
          \filldraw[fill=gray!80,draw=gray!80,rotate around={30:(bullet)}] (-1.2,-0.01) rectangle (1.2,0.01);
          \filldraw[fill=gray!80,draw=gray!80,rotate around={-30:(bullet)}] (-1.2,-0.01) rectangle (1.2,0.01);
          \node[above = 0.05] (S) at (150:0.75) {\(\color{gray!80} S_0\)};
          \node[above = 0.05] (S) at (30:0.75) {\(\color{gray!80} S_1\)};
          \draw[fill=black,draw=black] (bullet) circle (0.5pt);
          \node[below left = 0.1pt and -2pt of bullet] {\(v\)};
          \draw[thick,-]  (bullet) -- (0,1) node[above]{\(j_1\)};
          \draw[thick,-]  (bullet) -- (0,-1) node[below]{\(j_2\)};
          \draw[thick,-]  (bullet) -- (1,0) node[right]{\(j_3\)};
          \draw[thick,-]  (bullet) -- (-1,0) node[left]{\(j_4\)};
        \end{tikzpicture}
        \caption{Recoupled links according to their position relative to the two surfaces \(S_{0,1}\).}
        \label{fig:2_surfaces_definition}
     \end{subfigure}
     \caption{Recoupling prescription Alice and Bob choose in order to calculate the action of the cosine operator associated to \(S_0\) and \(S_1\) respectively. These and similar figures are inspired by \cite{Major:1999mc}.}
\end{figure}
\subsubsection{Exact solution}
In this subsection we assume \(j_4=0\). We assign to each Alice and Bob such a three-valent vertex \(v_{\text{A},\text{B}}\) of a spin network and two surfaces \(S_{0,1}\) whose normals \(n_{0,1}\) played the role of the detector settings in the classical Bell setup. In the following we label the recoupled spins in the four regions at Alice's or Bob's vertex as \(j^{\text{A},\text{B}}_i\) (and \(J^{\text{A},\text{B}}_i\) respectively). If we omit the labels A and B in an expression, we mean that this equation holds for both both \(j^\text{A}\) and \(j^\text{B}\). The two observers now measure the sign of the cosine operator associated to each surface via \(\operatorname{sgn}^{\text{A},\text{B}}_{SS_i,v_{\text{A},\text{B}}}\) as defined in \cref{sec:spec cos}. In the language of \cref{sec:class Bell inequ}:
\begin{equation}
    A_{0,1}=\operatorname{sgn}^{\text{A}}_{SS_{0,1},v_\text{A}} \,,\qquad B_{0,1}=\operatorname{sgn}^{\text{B}}_{SS_{0,1},v_\text{B}} \,.
\end{equation}
For the following calculations we fix two bases for both Alice and Bob. In order to lighten the notation we drop the labels \(A\), \(B\) and \(v\) of the sign operator represented in these bases.

Since the result will not depend on the initial coupling scheme we let both Alice and Bob start in an eigenbasis for \(\operatorname{sgn}_{SS_0}\). Namely they recouple \(j_1\) and \(j_3\) to \(j_{13}=j_2+\frac{r}{2}\) and set 
\begin{equation}
    \ket{\pm}_0\equiv\tcoupling{1}{3}{0.5}{2}[j_2\pm\frac{1}{2}] .
\end{equation}
As done for the spins, we will drop the label \(A\) or \(B\) if the result holds for both \(\ket{r}^\text{A}\) and \(\ket{r}^\text{B}\). Clearly, \(\operatorname{sgn}_{SS_1}\) is not diagonal in this basis, so A and B will have to recouple to intertwiners \(\ket{r}_1\) where \(j_2\) and \(j_3\) are coupled to \(j_{23}=j_1+\frac{r}{2}\). As the recoupling is a unitary change of bases and as we choose the Condon Shortley phase convention where all Clebsch-Gordan coefficients are real, we may write
\begin{equation}
    \ket{r}_1=\sum_s O_{rs}\ket{s}_0 \quad,\quad \ket{r}_0=\sum_s(O^T)_{rs}\ket{s}_1
\end{equation}
with \(O^T=O\in O(2)\)%
\footnote{The symmetry of \(O\) follows from the recoupling theory of the states and the property of the \(6j\) symbol.}. %
We know that \(\operatorname{sgn}^{(1)}_{SS_1}=\sigma_z=-\operatorname{sgn}^{(0)}_{SS_0}\)%
\footnote{Note that for \(\ket{\pm}_0\) we have \(j_2=j^\downarrow\) and \(j_2\pm\frac{1}{2}=j^\uparrow\) which results in the additional minus sign.} %
where the superscript indicates the basis. Changing into the \(0\) basis via \(O\) results in
\begin{equation}\label{eq:sgn_1 cos sin}
    \operatorname{sgn}^{(0)}_{SS_1}=-O\operatorname{sgn}^{(0)}_{SS_0}O .
\end{equation}
This is in fact basis independent as changing into another basis labeled \(2\) via the transformation \(\tilde{O}\) from 0 to 2 we get, using that \([\tilde{O},O]=0\):
\begin{equation}
    \operatorname{sgn}^{(2)}_{SS_1}=\tilde{O}\operatorname{sgn}^{(0)}_{SS_1}\tilde{O}=\tilde{O}O\operatorname{sgn}^{(1)}_{SS_1}O\tilde{O}=-O\tilde{O}\operatorname{sgn}^{(0)}_{SS_0}\tilde{O}O=-O\operatorname{sgn}^{(2)}_{SS_0}O \,,
\end{equation}
in other words we have in any basis
\begin{equation}\label{eq:sgn_1 sgn_0}
    \operatorname{sgn}_{SS_1}=-O \operatorname{sgn}_{SS_0} O \,.
\end{equation}
As every symmetric \(O\in O(2)\) can be written as a reflection followed by a rotation \(\exp(\mathrm{i}\frac{\alpha}{2}\sigma_y)\sigma_z\) we may write, introducing \(\operatorname{sgn}_{SS_{0,1}}=\vec{n}_{0,1}\cdot\vec{\sigma}\),
\begin{equation}
    \vec{n}_1\cdot\vec{\sigma} = \exp\qty(-\mathrm{i}\frac{\pi+\alpha}{2}\sigma_y) \, \vec{n}_0\cdot\vec{\sigma} \, \exp\qty(\mathrm{i}\frac{\pi+\alpha}{2}\sigma_y) = (R_y(\alpha+\pi)\vec{n}_0)\cdot\vec{\sigma}
\end{equation}
where \(R_y(\alpha+\pi)\) describes a rotation around the \(y\) axis with angle \(\pi+\alpha\). From a quantum mechanics this is just the projection of the fermion spin (modulo \(\frac{\hbar}{2}\)) onto the two directions \(\vec{n}_{0,1}\). In other words, the fermion \enquote{measures} the angle between the two surface normals to be \(\alpha+\pi\). 

As \cref{eq:sgn_1 sgn_0} holds in any basis we evaluate it in the 0 basis and get
\begin{equation}
    \operatorname{sgn}^{(0)}_{SS_1}=\cos(\alpha)\sigma_z+\sin(\alpha)\sigma_x .
\end{equation}
But this is exactly the same expression as in \cref{sec:class Bell inequ} for \(A_1\), \(B_1\) which means we can write down eigenvectors of the correlation operator \(B\) with eigenvalues
\begin{equation}
    \abs{\lambda}=2\sqrt{1+\abs{\sin(\alpha_A)}\abs{\sin(\alpha_B)}}\geq2 .
\end{equation}
Clearly, those generalized spin network states exhibit correlations which are larger than hidden variables would allow them to be, hence Alice and Bob have shown that the two fermions have to be entangled. Curiously, the magnitude of \(\lambda\) only depends on the angles the fermion measures between the two surface normals at both A and B.

An explicit calculation involving \(6j\) symbols yields
\begin{equation}
    \cos(\alpha)=\sum_{r\in\{\pm\}} r\,\qty(O_{+r})^2=\frac{(J_1)^2+(J_2)^2-(J_3)^3}{2J_1J_2} \,.
\end{equation}
This can be written as
\begin{equation}
    \cos(\alpha)=\frac{\vec{J}_1\cdot\vec{J}_2}{\norm\big{\vec{J}_1}\norm\big{\vec{J}_2}} 
\end{equation}
with \(\alpha\) being the angle opposite to \(\vec{J}_3\) in a triangle formed by the \(\vec
J_i\) as shown in \cref{fig: triangle}.

Note that technically we have constructed specific  states which carry intertwiner entanglement in the sense of \cite{Livine:2017fgq,Chen:2022rty,Bianchi_2023}. 
This is necessary as the cosine operator involves geometric operators acting on the gravitational part of the generalized spin network state. Indeed, consider a quite generic setup where an operator is of the form \(a_{ij}J^i\otimes S^j\) with some \(J^i\) acting non trivially on the holonomies or intertwiners. Also here, we can, by duality, consider operators that do not contain holonomies as acting on the intertwiners which means the answer to questions about spatial geometry or the fermions will be given in terms of properties of the intertwiners. 

Intertwiner entanglement has also been important in the construction of other states with special properties, see for example \cite{Freidel_2010,Bianchi:2018fmq,Baytas:2018wjd}.

\subsubsection{Continuum limit}
Often, as the classical limit in the spin network formalism one considers the coupling of many small \(j\)'s at the vertices. In this light it is natural to ask how the angle mentioned above changes when a small spin \(\Delta j\) edge changes its characteristic w.r.t. the surface \(S_1\). 
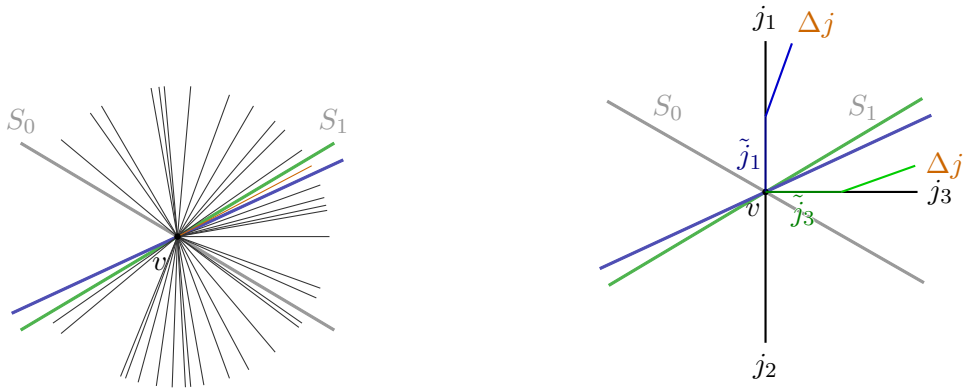
\begin{figure}
    \centering
    \begin{subfigure}[t]{0.47\textwidth}
        \centering
        \begin{tikzpicture}[scale=2]
          \coordinate (bullet) at (0,0);
          \filldraw[fill=gray!60!green,gray!60!green,rotate around={31:(bullet)}] (-1.2,-0.007) rectangle (1.2,0.007);
          \filldraw[fill=gray!60!blue,draw=gray!60!blue,rotate around={25:(bullet)}] (-1.2,-0.007) rectangle (1.2,0.007);
          \filldraw[fill=gray!80,draw=gray!80,rotate around={-31:(bullet)}] (-1.2,-0.007) rectangle (1.2,0.007);
          \node[above = 0] (S0) at (149:1.2) {\(\color{gray!80} S_0\)};
          \node[above = 0] (S1) at (31:1.2) {\(\color{gray!80} S_1\)};
          \foreach \a in {0,10,12,15,20,35,45,50,60,70,85,95,97,100,120,125,140,-20,-24,-35,-37,-50,-85,-62,-70,-80,-87,-92,-98,-105,-110,-112,-140,-145}{
            \draw[gray!40!black,-] (bullet) -- (\a:1);
          }
          \draw[orange!80!black,-] (bullet) -- (28:1);
          \draw[fill=black,draw=black] (bullet) circle (0.5pt);
          \node[below left = 4pt and -0.05pt of bullet] {\(v\)};
        \end{tikzpicture}
        \caption{The orange spin \(\Delta j\) edge changes its position w.r.t. the two possible orientations of the surface \(S_1\) at a vertex \(v\).}
        \label{fig:1plus2_surfaces_3j_general}
     \end{subfigure}
     \hfill
     \begin{subfigure}[t]{0.47\textwidth}
        \centering
        \begin{tikzpicture}[scale=2]
          \coordinate (bullet) at (0,0);
          \filldraw[fill=gray!60!green,draw=gray!60!green,rotate around={31:(bullet)}] (-1.2,-0.007) rectangle (1.2,0.007);
          \filldraw[fill=gray!60!blue,draw=gray!60!blue,rotate around={25:(bullet)}] (-1.2,-0.007) rectangle (1.2,0.007);
          \filldraw[fill=gray!80,draw=gray!80,rotate around={-30:(bullet)}] (-1.2,-0.007) rectangle (1.2,0.007);
          \node[above = 0.05] (S) at (149:0.75) {\(\color{gray!80} S_0\)};
          \node[above = 0.05] (S) at (31:0.75) {\(\color{gray!80} S_1\)};
          \draw[fill=black,draw=black] (bullet) circle (0.5pt);
          \node[below left = 0.5pt and -2pt of bullet] {\(v\)};
          \draw[thick,-]  (0,0.5) -- (0,1) node[above]{\(j_1\)};
          \draw[thick,blue!50!black,-]  (bullet) -- node[midway,left=-0.1]{\(\color{blue!50!black} \tilde{j}_1\)} (0,0.5);
          \draw[blue!80!black,thick,-] (0,0.5) -- (80:1) node[above right=-0.1] {\(\color{orange!80!black} \Delta j\)};
          \draw[thick,-]  (bullet) -- (0,-1) node[below]{\(j_2\)};
          \draw[thick,green!50!black,-]  (bullet) -- node[midway,below=-0.1]{\contour{white}{\(\textcolor{green!50!black}{\tilde{j}_3}\)}} (0.5,0);
          \draw[thick,-]  (0.5,0) -- (1,0) node[right]{\(j_3\)};
          \draw[green!80!black,thick,-] (0.5,0) -- (10:1) node[right] {\(\color{orange!80!black} \Delta j\)};
        \end{tikzpicture}
        \caption{Coupling of \(\Delta j\) to the above spin \(j_1\) or below spin \(j_3\) based on the orientation of \(S_1\).}
        \label{fig:1plus2_surfaces_3j_definition}
     \end{subfigure}
     \caption{Chosen recoupling prescription in case a small spin \(\Delta j\) edge changes its orientation w.r.t. \(S_1\).}
\end{figure}
Therefore we calculate the difference in angle between a setting where the \(\Delta j\) edge happens to lie below the surface \(S_1\) yielding \(a^{\downarrow}=\cos(\alpha^{\downarrow})\) and a setting where the \(\Delta j\) edge happens to lie above the surface yielding \(a^{\uparrow}=\cos(\alpha^{\uparrow})\). In the first case, \(j_3\) and \(\Delta j\) couple to \(\tilde{j}_3=j_3+\Delta j^{\downarrow}\) (we write \(\Delta J^{\downarrow}=\Delta j^{\downarrow}+\frac{1}{2}\)), in the second case \(j_1\) and \(\Delta j\) couple to \(\tilde{j}_1=j_1+\Delta j^{\uparrow}\) (we write \(\Delta J^{\uparrow}=\Delta j^{\uparrow}+\frac{1}{2}\)). For a pictorial presentation see \cref{fig:1plus2_surfaces_3j_general} and \cref{fig:1plus2_surfaces_3j_definition}. %
For the sake of clarity we only write down the difference in the two angles in the case of \(\Delta J^{\uparrow}=\Delta J^{\downarrow}\) \footnote{Here we have assumed that \(\operatorname{sgn}\qty(\sin\qty(\alpha^\uparrow))=\operatorname{sgn}\qty(\sin\qty(\alpha^\downarrow))\), i.e. the flip of the vertex does not appear near \(0,\pi\). A more general expression for the difference in angles we refer to \cite{Zeiss:2025}}:
\begin{equation}
    \alpha^{\downarrow}-\alpha^{\uparrow} \approx \frac{(J_1+J_3)^2-(J_2)^2}{\sqrt{\qty((J_1+J_2)^2-(J_3)^2)\qty((J_3)^2-(J_1-J_2)^2)}}\,\frac{\Delta J^{\uparrow}}{J_1} \,.
\end{equation} 
Clearly, in the regime of large \(J_1\) or small \(\Delta J\) this difference vanishes which could be interpreted as a continuum limit.
\subsubsection{Interpretation of the angle \(\mathbf{\alpha}\)}
\begin{figure}
    \centering
    \begin{subfigure}[t]{0.47\textwidth}
        \centering
        \begin{tikzpicture}[
        scale=1.5,
        >=latex,
        ]
            \coordinate (A) at (0,0);
            \coordinate (B) at (3,0);
            \coordinate (C) at (2,-1);
            \coordinate (C1) at (3,-1.5);
            \coordinate (D) at (1,2);
            \draw[thick,->] (A) -- node[below left=-0.1]{\(\vec{J}_1\)} (C);
            \draw[thick,->] (C) -- node[below right=-0.1]{\(\vec{J}_{24}\)} (B);
            \draw[thick,->] (B) -- node[pos=0.7,above=0.0]{\(\vec{J}_3\)}  (A);  
            \draw[thick,->] (A) -- node[above left=-0.1]{\(\vec{J}_{14}\)} (D);
            \draw[thick,->] (C) -- node[above right=-0.1]{\(\vec{J}_4\)} (D);
            \draw[thick,->] (D) -- node[above right=-0.1]{\(\vec{J}_2\)} (B);
        \end{tikzpicture}
        \caption{Case \(j_{1,2,3,4}\neq0\): \(\alpha\) is the angle between the two opposing sides \(J_{14}\) and \(J_{24}\) in the tetrahedron.}
        \label{fig: tetrahedron}
    \end{subfigure}
    \hfill
    \begin{subfigure}[t]{0.47\textwidth}
        \centering
        \begin{tikzpicture}[
            scale=2,
            >=latex
            ]
            \coordinate (A) at (0,0);
            \coordinate (B) at (3,0);
            \coordinate (C) at (2,-1);
            \draw[thick,->] (A) -- node[below left]{\(\vec{J}_1\)} (C);
            \draw[thick,->] (C) -- node[below right]{\(\vec{J}_2\)} (B);
            \draw[thick,->] (B) -- node[above=0.1]{\(\vec{J}_3\)}  (A);  
        \end{tikzpicture}
        \caption{Case \(j_4=0\): \(\alpha\) is the external angle between the two adjacent sides \(J_1\) and \(J_2\) in the triangle.}
        \label{fig: triangle}
    \end{subfigure}
    \caption{Geometric interpretation of the angle \(\alpha\) which describes the maximally attainable entanglement.}
\end{figure}
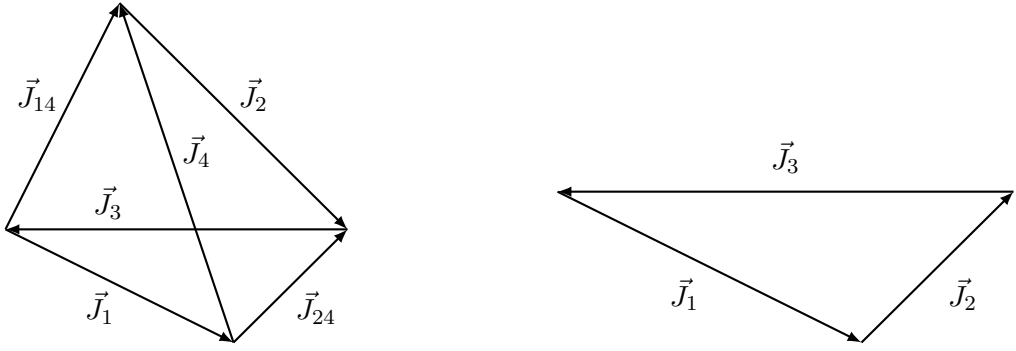
As we have established in \cref{sec:3valent setup} the fermion measures an angle \(\theta_{01}=\arccos(\vec{n}_0\cdot\vec{n}_1)=\arccos(-\cos(\alpha))=\pi+\alpha\) between the surface normals. It is natural to ask whether \(\theta_{01}\) agrees with the angle an outside observer would measure. Such an operator was defined by S.\,Major in \cite{Major:1999mc} and \cite{Major:2011ry}, based on the quantization of the scalar product of the two surface normals. We adjust the notation of \cite{Major:1999mc} and define
\begin{equation}
    \cos_{S_0S_1,v}=\frac{1}{2}\qty[C_{S_0S_1,v}+C_{S_0S_1,v}^\dagger]
\end{equation}
with
\begin{equation}
    C_{S_0S_1,v}=\frac{1}{\sqrt{A_{S_1,v}}}T_{S_1S_0,v}\frac{1}{\sqrt{A_{S_0,v}}} ,\quad T_{S_1S_0,v}=\delta^{ij}\qty(J_i^{v,S_1\uparrow}-J_i^{v,S_1\downarrow})\qty(J_j^{v,S_0\uparrow}-J_j^{v,S_0\downarrow}) .
\end{equation}
Here \(T_{S_1S_0,v}\) is the scalar product operator associated to the two surface normals. In the limit of \(j_i\gg\frac{1}{2}\) for all three vertices one finds after employing some recoupling theory
\begin{align}
    \mel**{\pm}{\cos_{S_0S_1,v}}{\pm} &\approx \frac{(J_1)^2+(J_2)^2-(J_3)^2}{2J_1J_2} \\
    \mel**{\pm}{\cos_{S_0S_1,v}}{\mp} &\approx 0
\end{align}
where we indicate by \(\approx\) that we neglect terms that fall off as \(\frac{1}{J_i}\) or \(\frac{J_i}{(J_j)^2}\) and higher orders thereof. Hence for large spins the operator acts diagonally on the \(\ket{r}\) with eigenvalue \(\cos(\alpha)\). We did not specify a certain basis as this holds in any basis since the diagonal operator is invariant under a \(O(2)\) trafo. A short calculation shows that this result is exact if we neglect the virtual fermion edge all along and only couple \(j_1\), \(j_2\) and \(j_3\) at \(v\). 

In summary we find that the angle between the two surface normals measured with respect to the fermion spin and the angle measured by the S.\,Major operator is the same if we take the limit of large \(j_i\) in both cases.
\subsection{Four-valent LQG setup}
\subsubsection{Exact solution}
Analogous to the three-valent case we now let all \(j_i\) be non vanishing. We couple \(j_2\), \(j_4\) and \(j_1\), \(j_3\) to the internal edges \(j_{24}\) and \(j_{13}\) respectively. We fix \(j_{24}\) such that we can write \(j_{13}=j_{24}+\frac{r}{2}\), \(r=\pm1\):
\begin{equation}
    \ket{r}^{(j_{24})}_0=\qty(-1)^{2j_{24}}
    \begin{tikzpicture}[baseline=($(j1.base)!.5!(j2.base)$)]
    \begin{feynman}[inline=($(j1.base)!.5!(j2.base)$)]
        \vertex [small, dot] (int_f) {}; 
        \vertex [above right=of int_f,dot] (int_13) {}; 
        \vertex [below left=of int_f,dot] (int_24) {}; 
        \vertex [above=of int_13] (j1) {\(j_1\)};
        \vertex [right=of int_13] (j3) {\(j_3\)}; 
        \vertex [below=of int_24] (j2) {\(j_2\)}; 
        \vertex [left=of int_24] (j4) {\(j_4\)}; 
        \vertex [below right=of int_f] (f) {\(\frac{1}{2}\)};
        \diagram*{
            (int_f) -- [fermion, edge label=\(j_{24}+\frac{r}{2}\)] (int_13),
            (int_13) -- [fermion] (j1),
            (int_13) -- [fermion] (j3),
            (int_f) -- [fermion, edge label=\(j_{24}\)] (int_24),
            (int_24) -- [fermion] (j2),
            (int_24) -- [fermion] (j4),
            (int_f) -- [charged scalar] (f),
        };
    \end{feynman}
    \end{tikzpicture}.
\end{equation}

Essentially the same argument as in \cref{sec:3valent setup} is valid with the indices of the symmetric operator \(O\) now becoming tuples, e.g. \((r,j_{14})\). After recoupling to a basis \(\ket{r}^{(j_{14})}_1\) where \(j_1\), \(j_4\) and \(j_2\), \(j_3\) couple to \(j_{14}\) and \(j_{23}\) respectively we arrive at
\begin{equation}
    \cos(\alpha) = (2j_{24}+1)\qty(2\qty(j_{24}+\frac{1}{2})+1)\sum_{\substack{j_{14},r\\j_{23}=j_{14}+r/2}}(2j_{14}+1)(2j_{23}+1)\,r\,
    {\begin{Bmatrix}
        j_1 & j_3 & j_{24}+\frac{1}{2} \\
        j_4 & j_2 & j_{24} \\
        j_{14} & j_{23} & \frac{1}{2}
    \end{Bmatrix}}^2 \,.\label{eq:pm_O_pm} 
\end{equation}
Analogously to before we can interpret \(\alpha+\pi\) as the angle between the surface normals \(\vec{n}_{0,1}\) associated to \(S_{0,1}\) with respect to the fermion spin. However, in this case a direct computation of \(\cos(\alpha)\) seams not possible due to the additional degree of freedom \(j_{24}\).
\subsubsection{Interpretation of the angle \(\alpha\)}
Although we were not able to calculate \(\cos(\alpha)\) explicitly, we strongly assume that the angle will be the same as the one we get when measuring the angle between the surface normals with the S.\,Major angle operator.

As in the three valent case we calculate the matrix element
\begin{align}\label{eq:fourvalent Major p_cos_p}
    ^{(j_{24})}_{\hphantom{j_{24})}0}\kern-2pt\mel{\pm}{\cos_{S_0S_1,v}}{\pm}^{(j_{24})}_0 &\approx \sum_{j_{14}} p_{\pm}(j_{14})\frac{(J_1)^2+(J_2)^2-(J_3)^2-(J_4)^2}{2J_{24}J_{14}}, \\
    ^{(j_{24})}_{\hphantom{j_{24})}0}\kern-2pt\mel{\pm}{\cos_{S_0S_1,v}}{\mp}^{(j_{24})}_0&\approx 0
\end{align}
where
\begin{equation}
    p_{\pm}(j_{14})=\sum_{j_{23}}(2j_{24}+1)\qty(2\qty(j_{24}\pm\frac{1}{2})+1)(2j_{14}+1)(2j_{23}+1)\ninejsymbol{j_1}{j_3}{j_{24}\pm\frac{1}{2}}{j_4}{j_2}{j_{24}}{j_{14}}{j_{23}}{\frac{1}{2}}^2 \,.
\end{equation}
We recover the same fraction in the sum if we calculate the matrix element without introducing the fermion edge in the first place and consider only the \(j_i\), given \(J_{24}\gg\frac{1}{2}\) (and similar for \(J_{14}\)):
\begin{equation}
    \sum_{j_{14}} p(j_{14})\frac{(J_1)^2+(J_2)^2-(J_3)^2-(J_4)^2}{2J_{24}J_{14}} \quad,\quad p(j_{14})=(2j_{24}+1)(2j_{14}+1)\sixjsymbol{j_1}{j_3}{j_{24}}{j_2}{j_4}{j_{14}}^2 \,.
\end{equation}
As expected, we could have also arrived here from \cref{eq:fourvalent Major p_cos_p} by replacing \(\frac{1}{2}\rightarrow0\). The fraction above turns out to be an angle in a tetrahedron formed by the six vectors \(\vec{J}_i\), \(\norm\big{\vec{J}_i}=J_i\) where \(\vec{J}_{24}=\vec{J}_2+\vec{J}_4=-\vec{J}_1-\vec{J}_3\) and \(\vec{J}_{14}=\vec{J}_1+\vec{J}_4=-\vec{J}_2-\vec{J}_3\). Namely 
\begin{equation}
    \cos(\alpha)=\frac{(-\vec{J}_{24})\cdot\vec{J}_{14}}{\norm\big{-\vec{J}_{24}}\norm\big{\vec{J}_{14}}}=\frac{{J_1}^2+{J_2}^2-{J_3}^2-{J_4}^2}{2J_{24}J_{14}} \,,
\end{equation}
for a visualization see \cref{fig: tetrahedron}. As \(p(j_{14})>0\) and \(\sum_{j_{14}}p(j_{14})=1\), \cref{eq:fourvalent Major p_cos_p} states that the eigenvalues of the cosine operator are given by a classical expectation value of angles \(\cos(\alpha)\) in tetrahedra with one side length varying, the probability distribution is given by \(p(j_{14})\). This \enquote{quantum fuzziness} in the tetrahedron and the angle respectively is somewhat expected as a classical tetrahedron is uniquely fixed by six lengths, however choosing the intertwiner with \(j_{1,2,3,4,24}\) fixed only fixes five \(J_i\)'s.
\pagebreak
\section{Summary and discussion}\label{sec:summary}%

We started our considerations of fermionic correlations in LQG by demonstrating that the notion of fermionic entanglement in loop quantum gravity is subtle. In particular, we showed that some obvious ways to investigate it fail.

Based on the idea of the spin projection in a given direction in space in quantum mechanics, we then constructed an operator that corresponds to the spin component in the direction normal to a given surface. This is an operator that contains, besides the premium spin operator, components of the quantum gravitational field corresponding to the spatial geometry at the surface.
On a technical level, this operator could be constructed using the standard gravitational electric flux operator, with the fermionic spin operator 
 \(S^i(x)\) replacing the classical smearing function.
 
After normalization and symmetrization we ended up with the cosine operator \(\cos_{\mathcal{F}S}\) measuring the angle of between spin and surface normal.
We were able to calculate the spectrum of this operator for an arbitrary surface \(\mathcal{F}\). It turns out that the spectrum is bounded by the quantum mechanical values \(\pm\frac{1}{\sqrt{3}}\) corresponding to a spin component of $\vec{S}\cdot \vec{n} = \pm \frac{1}{2}$. In the limit of large spins, those values are obtained. The limit of large spins can be interpreted very loosely as a regime in which the spatial geometry has small quantum fluctuations. The additional eigenvalues, as compared to the quantum mechanical case, can thus be interpreted as due to the quantum nature of geometry, and in particular, of the direction that the spin gets projected into. The sign $\operatorname{sgn}_{\mathcal{F}S}$ gives a convenient observable to use to construct the CHSH observable. In the regime of large spins, it coincides with  $2\vec{S}\cdot \vec{n}$. 


Using $\operatorname{sgn}_{\mathcal{F}S}$ we were able to construct a Bell test for loop quantum gravity: Alice and Bob each
measure the sign of the spin projection normal to two surfaces of their choosing and consider correlations of the measurement. We showed that there exist states whose correlations maximally violate the Bell-CHSC inequality. 

The degree of correlation in a given state can be  described by two angles \(\alpha_{A,B}\). They are the analogs of the angles between the two directions in which
Alice and Bob each measure the spin components in the quantum mechanical setup. Here one can interpret them as the respective angles between the two surface normals (up to a shift by \(\pi\)). This interpretation is supported by the fact that measuring the angle between the surface normals using the operator of \cite{Major:1999mc} without referring to the fermion spin yields the same result in cases where we can calculate them explicitly, and there are strong hints that this will hold in general. 

Curiously, the angles \(\alpha_{A,B}\) also admit another geometric interpretation. In a particular case (no edges in wedge \(IV\) created by the two surfaces) the angle $\alpha$ can be calculated explicitly. It turns out 
that it is the external angle in a triangle formed by the recoupling spins \(j_i\) of the three wedges \(I\), \(II\), \(III\) at the respective vertex. In the general case, we were not able to calculate \(\alpha_{A,B}\) explicitly, but comparing to \cite{Major:1999mc} leads us to expect that in this case \(\alpha_{A,B}\) are respectively expectation values of angles between opposite edges in a tetrahedron spanned by recoupling spins \(j_i\). The fact that the nature of the angle is now probabilistic is due to a free parameter manifesting itself as an undetermined side length of the tetrahedron.

There are several steps on can take from here on. First, one might further investigate the spectrum of \(\cos_{\mathcal{F}S}\) . As it contains countably infinitely many points it might be interesting to see whether there are accumulations points next to the quantum mechanical values. Perhaps there may also be an experimental setup in which the deviations of the behaviour of the spin components $\vec{S}\cdot \vec{n}{}_\mathcal{F}$ from the quantum mechanical case could be detectable.


Another idea that we are currently pursuing \cite{Sahlmann_Zeiss_TBD} is to investigate amplitudes for the generation of such Bell states, or at least states with non-classical correlations, via spinfoam models. 
If one could show that there exist foams which introduce fermionic entanglement as measured by the \(\cos_{\mathcal{F}S}\) (or \(\operatorname{sgn}_{\mathcal{F}S}\)) operator, this would demonstrate a first principles calculation of matter entanglement generation through quantum gravity in loop quantum gravity. A refinement of these ideas may lead to predictions of effects observable in proposed experiments \cite{Bose:2017nin,Marletto:2017kzi}.  

\acknowledgments
HS acknowledges the contribution from COST Action BridgeQG (CA23130), supported by COST (European Cooperation in Science and Technology). HS thanks Carlo Rovelli for pointing out to him that consideration of some kind of relative observable involving fermion spin and some sort of reference frame might be useful. The spin components normal to a surface are a direct implementation of this idea. HS would also like to thank Lennard Kossmann and Johannes Gro{\ss}e for various discussions on fermion correlations in loop quantum gravity. 




\newpage

\bibliographystyle{JHEP}
\bibliography{biblio}

\end{document}